\theoremstyle{plain}
\newtheorem{thm}{Theorem}[section]
\newtheorem{cor}[thm]{Corollary}
\newtheorem{lem}[thm]{Lemma}
\newtheorem{prop}[thm]{Proposition}
\theoremstyle{definition}
\newtheorem{defn}{Definition}[section]
\newtheorem{ass}{Assumption}[section]
\newtheorem{rmk}{Remark}[section]
\DeclareMathOperator*{\esssup}{ess\,sup}
\DeclareMathOperator*{\essinf}{ess\,inf}
\newcommand{\cL}{\mathcal{L}}
\newcommand{\cB}{\mathcal{B}}
\newcommand{\bP}{\mathbb{P}}
\newcommand{\bR}{\mathbb{R}}
\newcommand{\bN}{\mathbb{N}}
\newcommand{\sF}{\mathscr{F}}
\newcommand{\sP}{\mathscr{P}}
\renewcommand{\baselinestretch}{1.1}
\makeatletter\@addtoreset{equation}{section} \makeatother
\begin{document}

\title{A Constrained Control Problem with Degenerate Coefficients and Degenerate Backward SPDEs with Singular Terminal Condition\thanks{We {thank} seminar participants at various institutions for valuable comments and suggestions. UH acknowledges support through the SFB 649 ``Economic Risk''; {QZ is supported by NSF of China (No. 11101090, 11471079) {and the Science and Technology Commission of Shanghai Municipality (No. 14XD1400400). The paper was finished while UH was visiting the Center for Interdisciplinary Research (ZiF), Bielefeld University. Grateful acknowledgement is made for hospitality.}}}}

\author{ Ulrich Horst$^{1}$ \and Jinniao Qiu$^{2}$ \and Qi Zhang$^{3}$}

%
\footnotetext[1]{Department of Mathematics and School of Business and Economics, Humboldt-University of Berlin, Unter den Linden~6, 10099 Berlin, Germany. \textit{E-mail}: \texttt{horst@math.hu-berlin.de}}
\footnotetext[2]{Department of Mathematics, Humboldt-University of Berlin, Unter den Linden~6, 10099 Berlin, Germany. \textit{E-mail}:  \texttt{qiujinn@gmail.com}}
\footnotetext[3]{School of Mathematical Sciences, Fudan University,
            Shanghai 200433, China.
            \textit{E-mail}: \texttt{qzh@fudan.edu.cn}}

\maketitle

\begin{abstract}
  We study a constrained optimal control problem with possibly degenerate coefficients arising in models of optimal portfolio liquidation under market impact. The coefficients can be random in which case the value function is described by a degenerate backward stochastic partial differential equation (BSPDE) with singular terminal condition. For this degenerate BSPDE, we prove existence and uniqueness of a nonnegative solution. Our existence result requires a novel gradient estimate for degenerate BSPDEs.  
\end{abstract}

\bigskip

{\bf AMS Subject Classification:} 93E20, 60H15, 91G80

{\bf Keywords:} stochastic control, degenerate parabolic condition, backward stochastic partial differential equation, portfolio liquidation, singular terminal value.

\renewcommand{\baselinestretch}{1.1}\normalsize

\section{Introduction}

Let $T\in(0,\infty)$ and  $(\Omega,\bar{\sF},\bP)$ be a probability space equipped with a filtration $\{\bar{\sF}_t\}_{0 \leq t \leq T}$ which satisfies the usual conditions.  The probability space carries an $m$-dimensional Brownian motion $W$ and an independent point process $\tilde{J}$ on a non-empty Borel set $\mathcal{Z}\subset\bR^l$ with characteristic measure $\mu(dz)$. We endow the set $\mathcal{Z}$ with its Borel $\sigma$-algebra $\mathscr{Z}$ and denote by $\pi(dt,dz)$ the associated Poisson random measure. The filtration generated by $W$, together with all $\bP$ null sets, is denoted by $\{\sF_t\}_{t\geq 0}$. The predictable $\sigma$-algebra on $\Omega\times[0,+\infty)$ corresponding to $\{\sF_t\}_{t\geq0}$ and $\{\bar{\sF}_t\}_{t\geq0}$ is denoted $\sP$, respectively, $\bar{\sP}$. 

In this paper we address the following stochastic optimal control problem with constraints:
\begin{align}
  \min_{\xi,\rho}
  E\left[\int_0^T\!\!\!\Big(\eta_s(y_s)|\xi_s|^2+\lambda_s(y_s)|x_s|^2  \Big)\,ds
  +\!\int_0^T\!\int_{\mathcal {Z}} \!\!\!\gamma_s(y_s,z)|\rho_s(z)|^2\,\mu(dz)ds \right]
  \label{min-contrl-probm}
\end{align}
subject to
\begin{equation}\label{state-proces-contrl}
\left\{
\begin{split}
&x_t=x-\!\int_0^t\!\!\!\xi_s\,ds-\!\int_0^t\!\int_{\mathcal {\mathcal {Z}}}\rho_s(z)\,\pi(dz,ds),\,\,
\,t\in[0,T] \\
& x_T=0\\
& y_t=y+\!\int_0^t\!\!\!b_s(y_s)\,ds
  +\!\int_0^t\!\!\!\sigma_s(y_s)\,dW_s.
\end{split}
\right.
\end{equation}
The real-valued process $(x_t)_{t\in[0,T]}$ is the {\sl state process}. It is governed by a pair of {\sl controls} $(\xi,\rho)$. The $d$-dimensional process $(y_t)_{t\in[0,T]}$ is  uncontrolled. We sometimes write $x^{s,x,\xi,\rho}_t$ for $0\leq s\leq t\leq T$ to indicate the dependence of the state process on the control $(\xi,\rho)$, the initial time $s$ and initial state $x\in \mathbb{R}$. Likewise, we sometimes write $y^{s,y}_t$ to indicate the dependence on the initial time and state. 

The set of {\sl admissible controls} consists of all pairs $(\xi,\rho)\in\cL^2_{\bar{\sF}}(0,T)\times \cL^{2}_{\bar{\sF}}(0,T;L^2(\mathcal{Z}))$ s.t.  $x_{T}=0$ {a.s.} The {\sl cost functional} is assumed to be of the quadratic form:
\begin{equation}\label{cost0}
  J_t(x_t,y_t;\xi,\rho)=
  E\left[\int_t^T\!\!\!\Big(\eta_s(y_s)|\xi_s|^2+\lambda_s(y_s)|x_s|^2  \Big)\,ds
  +\!\int_t^T\!\int_{\mathcal {Z}} \!\!\!\gamma_s(y_s,z)|\rho_s(z)|^2\,\mu(dz)ds\Big|\sF_t  \right].
\end{equation}
The {\sl value function} is given by:
\begin{equation}\label{value-func}
  V_t(x,y){\triangleq}\essinf_{\xi,\rho} J_t(x_t,y_t;\xi,\rho)\big|_{x_t=x,y_t=y}.
\end{equation}

Control problems of the above form arise in models of optimal portfolio liquidation. In such models $x_t$ denotes the portfolio an investor holds at time $t \in [0,T]$, $\xi_t$ is the rate at which the stock is purchased or sold in a regular exchange at that time, $x_T=0$ is the {\sl liquidation constraint}, $\rho_t$ describes the number of stocks placed in a crossing network, $\pi$ governs the order execution in the crossing network and $y_t$ is a stochastic factor that drives the cost of liquidation. We refer to  \cite{AlmgrenChriss00,AnkirchnerJeanblancKruse13, HorstNaujokat13,KratzSchoeneborn13} and references therein for a detailed discussion of portfolio liquidation problems and an interpretation of the coefficients $\eta, \lambda$ and $\gamma$.

In a Markovian framework where all coefficients are deterministic functions of the control and state variables, the Hamilton-Jacobi-Bellman (HJB) equation turns out to be a \emph{deterministic} nonlinear parabolic partial differential equation (PDE) with a singularity at the terminal time; see \cite{GraeweHorstSere13} for details. Non-Markovian control problems with pre-specified terminal values have been studied in recent papers by Ankirchner, Jeanblanc {and Kruse \cite{AnkirchnerJeanblancKruse13},} and Graewe, Horst and Qiu \cite{GraeweHorstQui13}. Ankirchner et al. represented the value function in terms of a nonlinear backward stochastic differential equation (BSDE). 
%
The BSPDE-approach in \cite{GraeweHorstQui13} is more general. There, the authors construct the optimal control in feedback form assuming that there exists another independent $n$-dimensional Brownian motion $B$ s.t.
\begin{align}\label{eq_intro_Y_GHQ}
 y_t=y+\!\int_0^t\!\!b_s(y_s)\,ds
  +\!\int_0^t\!\!\sigma_s(y_s)\,dW_s + \!\int_0^t\!\bar{\sigma}_s(y_s)\,dB_s,
\end{align}
where the coefficients $b$, $\sigma$, $\bar{\sigma}$, $\lambda$, $\gamma$ and $\eta$ are measurable with respect to the filtration $\sF$ generated by $W$, and $\bar{\sigma}$ satisfies the {\sl super-parabolicity condition:} 
   \begin{align*}
       \sum_{k=1}^m\sum_{i,j=1}^d\bar\sigma^{ik}\bar\sigma^{jk}(t,x)\xi^i\xi^j\geq \delta |\xi|^2\quad \text{a.s.,} \,\,\forall\, (t,x,\xi)\in [0,T]\times\bR^d\times\bR^d.
   \end{align*}
   
We do not require the super-parabolicity condition. In a portfolio liquidation framework this condition is in fact not always natural; it is neither satisfied for many diffusion models of asset prices, nor for important absolutely continuous factors driving liquidation costs such as volume weighted average prices. For a given stock price process $(S_t)$ and a given model of aggregate intraday trading activities $(q_t)$, often a deterministic convex function, VWAP is defined as $v_t \triangleq \frac{\int_0^t S_u q_u du}{\int_0^t q_u du}$. This calls for an extension of the existing literature beyond the super-parabolic framework.

The constrained optimal control problem \eqref{min-contrl-probm} can be formally written as an unconstrained one:
\begin{align}
  \min_{\xi,\rho}
  E\left[\int_0^T\!\!\!\Big(\eta_s(y_s)|\xi_s|^2+\lambda_s(y_s)|x_s|^2  \Big)\,ds
  +\!\int_0^T\!\int_{\mathcal {Z}} \!\!\!\gamma_s(y_s,z)|\rho_s(z)|^2\,\mu(dz)ds
  +(+\infty)|x_T|^21_{\{x_T\neq 0\}} \right]
  \label{min-contrl-probm-unconstrain}
\end{align}
subject to
\begin{equation*}
\left\{
\begin{split}
&x_t=x-\!\int_0^t\!\!\!\xi_s\,ds-\!\int_0^t\!\int_{\mathcal {\mathcal {Z}}}\rho_s(z)\,\pi(dz,ds),\,\,
\,t\in[0,T];\\
& y_t=y+\!\int_0^t\!\!\!b_s(y_s)\,ds
  +\!\int_0^t\!\!\!\sigma_s(y_s)\,dW_s.
\end{split}
\right.
\end{equation*}

In view of Peng's seminal work \cite{Peng_92} on non-Markovian stochastic optimal control and the linear-quadratic structure of the cost functional, the dynamic programming principle suggests that the value function is of the form
\[
	V_t(x,y) = u_t(y) x^2{,}
\]
where $u$ is the first component of the pair $(u,\psi)$ satisfying formally the following backward stochastic partial differential equation (BSPDE) with singular terminal condition:
\begin{equation}\label{BSPDE-singlr}
  \left\{\begin{array}{l}
  \begin{aligned}
  -du_t(y)=\, &\bigg[\textrm{tr}\left( \frac{1}{2}\sigma_t(y)\sigma^*_t(y) D^2 u_t(y)
        + D \psi_t(y)\sigma^{*}_t(y)\right)+b^{*}_t(y)D u_t(y)
        + F(s,y,u_t(y)) \Big]\, dt  \\
        &-\psi_t(y)\, dW_{t}, \quad (t,y)\in [0,T]\times \bR^d;\\
    u_T(y)=\,& +\infty,  \quad y\in\bR^d{.}
    \end{aligned}
  \end{array}\right.
\end{equation}
{Here}
\begin{equation} \label{F}
  F(t,y,r) {\triangleq} -
        \!\int_{\mathcal{Z}}\!\frac{r^2}{\gamma(t,y,z)+r}\mu(dz)
        -\frac{r^2}{\eta_t(y)}+\lambda_t(y), \quad
        (t,y,r)\in \bR_+\times \bR^d\times \bR.
\end{equation}

BSPDEs were first introduced by Bensoussan \cite{Bensousan_83} as the adjoint equations of forward SPDEs and have since been extensively used in the stochastic control literature including \cite{DuTangZhang-2013,Du_Zhang_DegSemilin2012,EnglezosKaratzas09,Haussman-1987,QiuWei-RBSPDE-2013}. BSDEs with singular terminal conditions were first studied by Popier \cite{Popier06}. {To the best of our knowledge, {\sl degenerate} BSPDEs with singular terminal conditions have never been studied before. }

{As in PDE theory, degenerate BSPDEs are fundamentally different from super-parabolic ones (see \cite{DuQiuTang10,DuTangZhang-2013,Hu_Ma_Yong02}).}  Using recent results on degenerate BSPDEs \cite{DuTangZhang-2013,Du_Zhang_DegSemilin2012,Hu_Ma_Yong02,ma2012non}, standard arguments show that the BSPDE \eqref{BSPDE-singlr} has a unique solution and satisfies a comparison principle if the terminal value is finite. To show that a solution $(u,\psi)$ to the BSPDE with singular terminal value can be obtained as the limit of a sequence of solutions to with finite terminal values requires a gradient estimate for $u$, besides a growth condition of the limit near the terminal time. In the non-degenerate case analyzed in \cite{GraeweHorstQui13} such gradient estimate is not needed. The non-degenerate case only requires the growth condition on $u_t$ near the terminal time in which case the super-parabolicity guarantees sufficient regularity of $u$. 

The gradient estimate for a solution to a \emph{degenerate} BSPDE generally depends on its gradient at terminal time. In our case, the terminal value of the BSPDE is singular and hence it does in no obvious way characterize the gradient. Instead, we derive our gradient estimate from the gradient estimates of the approximating sequence. Our estimate seems new even in the Markovian case. Along with the gradient estimate, an explicit asymptotic estimate for the solution of our BSPDE near the terminal time is given. 

Due to the degeneracy of the diffusion coefficient, no generalized It\^o-Kunita formula {(like the one used in \cite{GraeweHorstQui13})} for the random filed $u$ satisfying the BSPDE \eqref{BSPDE-singlr} in the distributional sense is available. To prove the verification theorem we appeal instead to the link between degenerate BSPDEs and forward-backward stochastic differential equations (FBSDEs), which requires certain regularity of the random field $u$ including the gradient estimate mentioned above. Finally, using the It\^o formula for the square norm of the positive part of the solutions for BSPDEs, we prove that the obtained solution is the unique nonnegative one.

{The remainder of this paper is organized as follows. In Section 2, we introduce auxiliary notation and state our main result. In Section 3, we show that BSPDE \eqref{BSPDE-singlr} has a non-negative solution. The verification theorem and the uniqueness of solution are proved in Section 4. Selected results on semi-linear degenerate BSPDEs are recalled in an appendix where {we also show the well-posedness of the truncated version of BSPDE  \eqref{BSPDE-singlr} and establish a comparison principle for degenerate BSPDEs allowing for singular terminal values.} 
%

\section{Preliminaries and Main Result}

\subsection{Notation}

Throughout this paper, we use the following notation. $D$ and $D^2$ {denote} the first {order} and second order derivative {operators}, respectively; partial derivatives are denoted by $\partial$. For a Banach space $U$ and real {number} $p\in[1,\infty)$, we denote by $\cL^\infty_{\bar{\sF}}(0,T;U)$ and $\cL^p_{\bar{\sF}}(0,T;U)$ the Banach {spaces} of {all $\bar{\sP}$-progressively measurable $U$-valued processes which are essentially bounded and $p$-th integrable, respectively}. The spaces $\cL^p_{{\sF}}(0,T;U)$, $p \in [1,\infty]$, are defined analogously with $\bar{\sP}$ replaced by ${\sP}$.
For $k \in \mathbb{N}^+$ and $p\in [1,\infty)$, $H^{k,p}$ is the Sobolev space of all real-valued functions $\phi$ whose up-to $k$th order derivatives belong to $L^p(\bR^d)$, equipped with the usual Sobolev norm $\|\phi\|_{H^{k,p}}$.
For $k=0$, $H^{0,p}\triangleq L^p(\bR^d)$. Moreover,
\[
	H_{loc}^{k,p}\triangleq \{u;\,u\psi\in H^{k,p},\,\forall\,\psi\in C_c^{\infty}(\bR^d)\}
\]	
with $C_c^{\infty}(\bR^d)$ being the set of all the infinitely differentiable functions with compact support in $\bR^d$,
and $\cL^p(0,T;H_{loc}^{k,p})$ is defined as usual. For simplicity, by  $u=(u_1,\ldots,u_l)\in H^{k,p}$, $l\in\mathbb{N}^+$, we mean $u_1,\ldots,u_l\in H^{k,p}$ and $\|u\|_{H^{k,p}}^p{\triangleq}\sum_{j=1}^l\|u_j\|_{H^{k,p}}^p$. We use $\langle\cdot,\,\cdot\rangle$ and $\|\cdot\|$ to denote the inner product and {the} norm {in the usual Hilbert space $L^2(\bR^d)$ ($L^2$ for short), respectively.} We denote by $C^w_{\sF}([0,T];H^{k,p})$ the space of all $H^{k,p}$-valued and jointly measurable processes $(X_{t})_{t\in [0,T]}$ which are $\mathscr{F}$-adapted, a.s. weakly continuous with respect to $t$ on $[0,T]$\footnote{This means that for any $f\in(H^{k,p})^*$, the dual space of $H^{k,p}$, the mapping $t \mapsto f(X_t)$ is a.s. continuous on $[0,T]$.} and
$${E} \left[\sup_{t\in[0,T]}\|X_t\|_{H^{k,p}}^p\right] < \infty.$$ In the sequel, we write for any positive integer $k$
$$S^w_{\sF}([0,T];H^{k,p}){\triangleq}C^w_{\sF}([0,T];H^{k,p})\cap L^2(\Omega,\sF;C([0,T];H^{k-1,p})),\quad p\in[1,\infty].$$

\subsection{Assumptions and main result}

We now define what we mean by a solution to a BSPDE whose terminal value may be infinite.

\begin{defn}\label{defn-solution}
Let $G:\,\Omega\rightarrow \bar\bR:=[-\infty,+\infty]$ be $\sF_T/\cB(\bar\bR)$-measurable. A pair of processes $(u,\psi)$ is a solution to the BSPDE
  \begin{equation*}
  \left\{\begin{array}{l}
  \begin{aligned}
  -du_t(y)=\, &f(t,y,u,Du,D^2u,\psi,D \psi)\, dt
  -\psi_t(y)\, dW_{t}, \quad (t,y)\in [0,T]\times \bR^d;\\
    u_T(y)=\,& G(y),  \quad y\in\bR^d
    \end{aligned}
  \end{array}\right.
\end{equation*}
if $(u,\psi)\in \cL^2_{\sF}(0,\tau;H^{1,2}_{loc}) \times \cL^2_{\sF}(0,\tau;H_{loc}^{0,2}) $ for any $\tau \in(0,T)$, $\lim_{\tau\rightarrow T-} u_{\tau}(y)=G(y)$ a.e. in $\bR^d$ a.s. 
and for any $\varphi\in C_c^{\infty}(\bR^d)$, $\left\langle \varphi,\,f(\cdot,\cdot,u,Du,D^2u,\psi,D \psi)\right\rangle \in \cL^2(0,\tau;\bR)$ and
  \begin{align*}\label{HQZ2}
    \langle \varphi,\, u_t\rangle
    &= \langle\varphi,\, u_\tau\rangle+\!
    \int_t^\tau\!\!\!\langle \varphi,\, f(s,y,u,Du,D^2u,\psi,D \psi)\rangle\,ds -\!\int_t^\tau\!\!\langle\varphi,\,\psi_sdW_s\rangle\, \, {\text{a.s.},\, \forall \, 0\leq t\leq \tau<T.}
  \end{align*}
\end{defn}


We establish existence of a solution to BSPDE \eqref{BSPDE-singlr} under the following regularity conditions on the random coefficients. 

\begin{ass}
\begin{itemize}
	\item [(H.1)]  The functions
$b,\sigma,\eta,\lambda: \Omega\times[0,T]\times\bR^d\longrightarrow \bR^d\times \bR^{d\times m}\times  \bR_+\times \bR_+$ are
$\sP\times \mathscr{B}(\bR^d)$-measurable and essentially bounded by $\Lambda>0$, $\gamma: \Omega\times[0,T]\times\bR^d\times\mathcal{Z}\longrightarrow [0,+\infty]$
is $\sP\times \mathscr{B}(\bR^d)\times \mathscr{Z}$-measurable. Moreover,
there exists a positive constant $\kappa$ s.t. a.s.
$$
\eta_s(y)\geq \kappa,\quad \forall\,(y,s)\in\bR^d\times[0,T].
$$
\item[(H.2)] The first-order derivatives of $b$, $\eta$, $\lambda$ and the up to second-order derivatives of $\sigma$ exist and are bounded by some $L>0$ uniformly for any $(\omega,t)\in \Omega\times [0,T]$.
\item[(H.3)] There exists $(T_0,p_0)\in[0,T)\times (2,\infty)$ s.t. $$\essinf_{(\omega,t,y)\in\Omega\times[T_0,T]\times \bR^d} \eta_t(y) \geq \bigg(1-\frac{1}{2p_0}\bigg)
  \esssup_{(\omega,t,y)\in\Omega\times[T_0,T]\times \bR^d} \eta_t(y).$$
\end{itemize}
\end{ass}

{ The first two conditions above are standard and adopted throughout. The third is particular to the degenerate case. It guarantees sufficient integrability of the derivative of the value function and is satisfied if, for instance, $\eta\in C([t_0,T];L^{\infty}(\Omega\times\bR^d))$ with $\eta_T(\cdot)$ being a positive constant for some $t_0\in [0,T)$.

In view of (H.1) the random variable $F(\cdot,\cdot,0)$ belongs to $\cL^{\infty}_{\sF}(0,T;L^{\infty}(\bR^d))$ where $F$ is defined in (\ref{F}). Since it is more convenient to work with a BSPDE whose driver belongs to $L^p$ \cite{DuQiuTang10,DuTangZhang-2013,Du_Zhang_DegSemilin2012} we use the weight function 
\begin{equation}
\label{def-weight-func}
\theta(y)=(1+|y|^2)^{-q}\quad \textrm{for} \quad  y\in\mathbb{R}^d,\quad \text{with }q>d
\end{equation}
so that $\theta F(\cdot,\cdot,0)\in \cL^{p}(0,T;H^{1,p})$ for any $p\in[1,\infty)$. As in \cite{GraeweHorstQui13} a direct computation shows that $(u,\psi)$ solves \eqref{BSPDE-singlr} if and only if $(v,\zeta){\triangleq}(\theta u,\theta\psi)$ solves the BSPDE \eqref{BSPDE-singlr-1} given in the appendix.} We are now ready to state our main results. The following is a summary of Theorems \ref{thm-existence}, \ref{thm-verification} and \ref{thm-minimal}.

\begin{thm}\label{thm-conclude}
Under Conditions (H.1)--(H.3) the BSPDE \eqref{BSPDE-singlr} admits a unique nonnegative solution $(u,\psi)$, i.e., for any solution $(\bar{u},\bar{\psi})$ to BSPDE \eqref{BSPDE-singlr} satisfying
  $$
(\theta\bar{u},\theta\bar{\psi}+\sigma^*D(\theta\bar{\psi}))\in
  S^w_{\sF}([0,t];H^{1,2})\times L^{2}_{\sF}(0,t;H^{1,2}),\quad \forall\,t\in(0,T)
 $$
 and $\bar{u}_t(y)\geq 0$ {a.e.} in $\Omega\times[0,T)\times\bR^d$, we have a.s. for all $t\in[0,T)$,
  $
  \bar{u}_t\geq u_t
  $  a.e. in  $\bR^d$. If we {further} have $p_0>2d+2$ and $\theta\bar{u}\in C^w_{\sF}([0,t];H^{1,p})$ for some $p\in (2d+2,p_0)$, then a.s. for {all} $t\in [0,T)$,   $  \bar{u}_t= u_t $ a.e. in $\bR^d$.
  For this solution, given any $p\in(2,p_0)$ there exists $\alpha\in(1,2)$, s.t.
  $\{(T-t)^{\alpha}(\theta u_t,\theta\psi_t + \sigma^*D (\theta u_t))(y);\,(t,y)\in[0,T]\times \bR^d\}
  $
  belongs to
  $
  \left(S^w_{\sF}([0,T];H^{1,2})\cap C^w_{\sF}([0,T];  H^{1,p})\right)\times L^{2}_{\sF}(0,T;H^{1,2}),
  $
    and  there exist two constants $c_0>0$ and $c_1>0$  s.t. a.s.
  \begin{equation*}
    \frac{c_0}{T-t}\leq u_t \leq \frac{c_1}{T-t}\quad\ {a.e.}\ in\,\bR^d, \, \forall\,t\in[0,T).
  \end{equation*}
Moreover, if the constant $p_0$ introduced in (H.3) satisfies $p_0>2d+2$, then:
\begin{equation*}
    V(t,y,x){\triangleq}u_t(y)x^2,\quad (t,x,y)\in[0,T]\times \bR\times \bR^d,
 \end{equation*}
  coincides with the value function in \eqref{value-func}, and the optimal (feedback) control is given by \begin{align}
    \left(\xi^{*}_t,\ \rho^*_t(z)\right)=
    \left(\frac{u_t(y_t)x_t}{\eta_t(y_t)},\  \frac{u_t(y_t)x_{t-}}{\gamma_t(z,y_t)+u_t(y_t)}   \right).\label{eq-thm-control}
  \end{align}
\end{thm}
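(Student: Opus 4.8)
The plan is to recover the statement by establishing, in turn, existence together with the two-sided bound (Theorem~\ref{thm-existence}), minimality and uniqueness (Theorem~\ref{thm-minimal}), and the verification (Theorem~\ref{thm-verification}). Throughout I would work with the weighted unknown $(v,\zeta)=(\theta u,\theta\psi)$ solving \eqref{BSPDE-singlr-1}, with $\theta$ as in \eqref{def-weight-func}, so that the driver lies in $\cL^p(0,T;H^{1,p})$. The starting point is approximation: for each $n\in\bN$ let $(u^n,\psi^n)$ solve \eqref{BSPDE-singlr} with the finite terminal value $u^n_T\equiv n$; well-posedness and a comparison principle for such degenerate BSPDEs with bounded data are provided in the appendix. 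Since $F(t,y,0)=\lambda_t(y)\ge 0$ by (H.1), the constant $0$ is a subsolution, and comparison gives $0\le u^n_t\le u^{n+1}_t$, so $u_t\triangleq\lim_{n\to\infty}u^n_t$ is well defined. A supersolution of the form $\tfrac{a}{T-t}+b$ with $a>\esssup\eta$ and $b$ large --- built using $F(t,y,r)\le -r^2/\eta_t(y)+\lambda_t(y)$ and discarding the nonnegative jump term --- gives, via comparison, $u^n_t\le c_1/(T-t)$ uniformly in $n$; the matching lower bound $u_t\ge c_0/(T-t)$ for the limit will follow once $u$ is known to be a solution, by comparison with a subsolution $c_0/(T-t)$ and the comparison principle for singular terminal values from the appendix.

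The crux is an a priori estimate on the \emph{gradient} that is uniform in $n$. In a degenerate BSPDE the natural gradient bound depends on the gradient at the terminal time, which here is singular, so the estimate has to be produced at the level of the approximating sequence. Differentiating \eqref{BSPDE-singlr-1} in $y$, the pair $(v^n,Dv^n)$ with $v^n=\theta u^n$ solves a linear degenerate BSPDE whose coefficients involve $u^n$ and $Du^n$; multiplying by the temporal weight $(T-t)^{\alpha}$ with a carefully chosen $\alpha\in(1,2)$ and invoking the $L^p$ a priori estimates for linear degenerate BSPDEs of \cite{DuTangZhang-2013,Du_Zhang_DegSemilin2012,Hu_Ma_Yong02} --- where the zeroth-order coefficient $\partial_r F\le 0$ is of favorable sign and the factor $(T-t)^{\alpha-1}$ absorbs the blow-up of $u^n$ --- I would obtain, for any $p\in(2,p_0)$,
\[
\sup_{n}\Big\|(T-t)^{\alpha}\big(\theta u^n,\ \theta\psi^n+\sigma^*D(\theta u^n)\big)\Big\|_{\big(S^w_{\sF}([0,T];H^{1,2})\cap C^w_{\sF}([0,T];H^{1,p})\big)\times L^2_{\sF}(0,T;H^{1,2})}<\infty .
\]
Condition (H.3) is precisely what makes this $L^p$ computation close, guaranteeing the required integrability of the gradient. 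With this in hand, monotone convergence $u^n\uparrow u$ plus weak compactness lets me pass to the limit in the variational formulation of Definition~\ref{defn-solution} on each $[0,\tau]$, $\tau<T$; the lower bound forces $\lim_{\tau\to T}u_\tau=+\infty$, so $(u,\psi)$ is the desired solution, and both the two-sided bound and the weighted regularity descend from the uniform estimate.

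For minimality, given any nonnegative solution $\bar u$ with the stated regularity, one has $\bar u_T=+\infty\ge n=u^n_T$, so the comparison principle allowing singular terminal values (appendix) gives $\bar u_t\ge u^n_t$ a.e., and letting $n\to\infty$ yields $\bar u\ge u$. For the stronger uniqueness when $p_0>2d+2$ and $\theta\bar u\in C^w_{\sF}([0,t];H^{1,p})$ with $p\in(2d+2,p_0)$, the Sobolev embedding $H^{1,p}(\bR^d)\hookrightarrow C^{1-d/p}$ gives pointwise control of $\bar u$, which in particular forces $\bar u_t\le c_1/(T-t)$; then Itô's formula applied to the square norm of the positive part of the (weighted) difference $\bar u-u$, in which the contribution of $\partial_r F$ is nonpositive by concavity of $F$ in $r$, together with a Gronwall argument on $[0,\tau]$ and $\lim_{\tau\to T}\|(\bar u_\tau-u_\tau)^+\|=0$ (from the matching rates $c_0/(T-t)\le u_t,\bar u_t\le c_1/(T-t)$), gives $\bar u\le u$, hence equality.

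Finally, the verification. Because the diffusion of $y$ is degenerate there is no Itô--Kunita formula for $u_t(y_t)$, so I would instead use the link between the degenerate BSPDE \eqref{BSPDE-singlr} and FBSDEs: the regularity secured above --- crucially the gradient estimate --- guarantees that composing \eqref{BSPDE-singlr} with the forward flow $y^{s,y}$ yields, for $t<T$, a genuine solution of the associated BSDE. Given that, for any admissible $(\xi,\rho)$, applying Itô to $u_s(y_s)|x_s|^2$ on $[t,\tau]$, completing the square in $(\xi,\rho)$ using the explicit form \eqref{F} of $F$, and letting $\tau\to T$ --- where the boundary term $u_\tau(y_\tau)|x_\tau|^2$ vanishes because $x_T=0$ while $u_\tau\le c_1/(T-\tau)$ --- gives $J_t(x,y;\xi,\rho)\ge u_t(y)x^2$; and the feedback $(\xi^*,\rho^*)$ of \eqref{eq-thm-control} is admissible --- here the lower bound $u\ge c_0/(T-t)$ is exactly what drives the controlled state to $x^*_T=0$ --- and makes the square vanish, so $J_t(x,y;\xi^*,\rho^*)=u_t(y)x^2$. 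Hence $V$ in \eqref{value-func} equals $u_t(y)x^2$ and $(\xi^*,\rho^*)$ is optimal. The main obstacle throughout is the uniform gradient estimate of the second step: choosing $\alpha$ so that $(T-t)^{\alpha}(\theta u^n,\ \theta\psi^n+\sigma^*D(\theta u^n))$ stays bounded in the right norms independently of $n$, and making the degenerate-BSPDE $L^p$ computation close under (H.3), is what the whole argument hinges on.
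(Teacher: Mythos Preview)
Your plan for existence, the gradient estimate, minimality, and verification is essentially the paper's: approximate by finite terminal data, get two-sided bounds by comparison with explicit sub/supersolutions, weight by a power of $(T-t)$ and exploit the sign of the linearised quadratic term together with (H.3) to close a uniform $H^{1,p}$ estimate, pass to the limit, and for verification replace the (unavailable) It\^o--Kunita formula by the FBSDE link of Theorem~\ref{prop-bspde-DTZ}(iii). One refinement worth noting: the paper weights by $\big(\tfrac{\kappa_1}{N}+\delta^N(T-t)\big)^{\alpha_2}$ rather than $(T-t)^{\alpha}$; the $N^{-1}$ shift makes the terminal value of the weighted unknown finite uniformly in $N$, and the precise cancellation (their inequality \eqref{eq-key-monotn}) uses the \emph{lower} bound on $v^N$ together with (H.3), not merely $\partial_r F\le 0$.

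The genuine gap is in your uniqueness step. First, the Sobolev embedding $H^{1,p}\hookrightarrow C^{1-d/p}$ only gives $\|\theta\bar u_t\|_{L^\infty}\le C\|\theta\bar u_t\|_{H^{1,p}}$; since the hypothesis is $\theta\bar u\in C^w_{\sF}([0,t];H^{1,p})$ for each fixed $t<T$, with no control as $t\uparrow T$, this does \emph{not} force $\bar u_t\le c_1/(T-t)$. Second, even granting the two-sided bound $c_0/(T-t)\le u_t,\bar u_t\le c_1/(T-t)$, it does not follow that $\|(\theta(\bar u_\tau-u_\tau))^+\|\to 0$ as $\tau\uparrow T$: the difference is only controlled by $(c_1-c_0)/(T-\tau)$, which blows up, so the Gronwall argument on $[0,\tau]$ gives nothing in the limit. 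The paper avoids both issues. For the upper bound on $\bar u$ it compares with a \emph{time-shifted} explicit supersolution: for each $\delta\in(0,T)$ the deterministic $\hat u_{t+\delta}=\Lambda\coth(T-\delta-t)$ solves the BSPDE associated with $(\Lambda,+\infty,\Lambda)$ and is singular at $T-\delta<T$; applying the comparison principle of Proposition~\ref{prop-comprn} (which allows a singular terminal value on the larger side) on $[0,T-\delta]$ and letting $\delta\downarrow 0$ gives $\bar u_t\le \Lambda e^{2T}/(T-t)$. Once both $u$ and $\bar u$ satisfy \eqref{eq-thm-u} and the $H^{1,p}$ regularity with $p>2d+2$, the paper does not attempt an It\^o--Gronwall argument on the difference at all: it simply invokes Theorem~\ref{thm-verification} for \emph{each} of them, so $u_t(y)x^2=V_t(x,y)=\bar u_t(y)x^2$ and hence $u=\bar u$.
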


{\begin{rmk} Our main result holds for other nonlinear dependencies of $F$ on $u_t$ under more or less standard assumptions. However, as indicated in \cite{Du_Zhang_DegSemilin2012} (see also Theorem \ref{prop-bspde-DTZ}), due to the lack of the regular estimate on the second unknown variable $\psi$, the nonlinear term $F$ needs to be independent of $\psi$, though a linear dependence on $\psi+\sigma^*Du$ is allowed.
\end{rmk}}

\begin{rmk} 
If all the coefficients $b,\sigma,\lambda,\eta,\gamma$ are deterministic, then the optimal control problem is Markovian and the  BSPDE
\eqref{BSPDE-singlr} reduces to the following parabolic PDE:
\begin{equation}\label{BSPDE-singlr-sec6}
  \left\{\begin{array}{l}
  \begin{aligned}
  -\partial_t u_t(y)=\ &\textrm{tr}\left( \frac{1}{2}\sigma_t\sigma^{*}_t(y)
  D^2 u_t(y)
        \right)+b^{*}_t(y)D u_t(y)-\frac{|u_t(y)|^2}{\eta_t(y)}+\lambda_t(y) \\
        &\,-
        \!\int_{\mathcal{Z}}\!\frac{|u_t(y)|^2}{\gamma(t,y,z)+u_t(y)}\mu(dz)
, \quad (t,y)\in [0,T]\times \bR^d;\\
    u_T(y)=\,& +\infty,  \quad y\in\bR^d,
    \end{aligned}
  \end{array}\right.
\end{equation}
where $\sigma_t\sigma^{*}_t$ could be degenerate. Under {Conditions} (H.1)--(H.3), this PDE holds in the distributional (or weak) sense. As such, our results are new even in the Markovian case.
\end{rmk}

\begin{rmk}\label{rmk-gradient-estimate}
{The main novelty of Theorem \ref{thm-conclude} is the gradient estimate for the solution of BSPDE \eqref{BSPDE-singlr}. 
  Gradient estimates for solutions to \emph{degenerate} BSPDEs or even deterministic PDEs generally depend on both its gradient at terminal time and the gradients of the coefficients. For instance, let us consider a trivial version of degenerate PDE \eqref{BSPDE-singlr-sec6}
  $$
  -\partial_t v_t(y)=\lambda_t(y), \quad (t,y)\in [0,T]\times \bR^d;\quad v_T(y)=G(y),\quad y\in\bR^d,
  $$
  for some $G\in H^{1,2}(\bR^d)$. Then we have 
  $$Dv_t(y)=DG(y)+\int_t^TD\lambda_t(y)\,dt,\quad (t,y)\in [0,T]\times \bR^d,$$
  and thus for any $(t,y)\in [0,T]\times \bR^d$,
  $$(T-t)^{\alpha}D(\theta v_t) (y)=(T-t)^{\alpha}v_t (y)D\theta(y) 
  +(T-t)^{\alpha}\theta(y)\left(DG(y)+\int_t^TD\lambda_t(y)\,dt\right).$$
  However, in our case, the terminal value of the BSPDE is singular which does in no obvious way characterize the gradient. For instance, for any given positive Schwartz function $g$ and positive real numbers $r$ and $q$, the function $\varphi$ defined by
   $$\varphi_t(y)\triangleq\frac{r\, g\left((T-t)^{q}y\right)}{(T-t)^{q}},\quad (t,y)\in [0,T]\times \bR^d$$
  shares the singular terminal condition $\varphi_T(y)=+\infty$, $y\in\bR^d$, but its gradient $D\varphi_T(y)=r Dg(0)$ changes according to  the specific choice of the pair $(r,g)$. Hence, in this sense, our gradient estimate is nontrivial, and seems to be new even for the deterministic case. 
  }
\end{rmk}





\section{Existence of a nonnegative solution to BSPDE \eqref{BSPDE-singlr}}\label{sec-existence}


In this section we establish the following existence of solutions result for our BSPDE \eqref{BSPDE-singlr}. 

\begin{thm}\label{thm-existence}
Under Conditions (H.1)-(H.3), for any $p\in [2,p_0)$  BSPDE \eqref{BSPDE-singlr} has a solution $(u,\psi)$ s.t.  for some $\alpha\in(1,2)$,
  $\{(T-t)^{\alpha}(\theta u_t,\theta\psi_t + \sigma^*D (\theta u_t))(y);\,(t,y)\in[0,T]\times \bR^d\} $ belongs to
 $$  \left(S^w_{\sF}([0,T];H^{1,2}) \cap C^w_{\sF}([0,T]; H^{1,p})\right) \times L^{2}_{\sF}(0,T;H^{1,2}),$$
  and a.s.
  \begin{align}
    \frac{c_0}{T-t}\leq u_t(y) \leq \frac{c_1}{T-t}\quad{\textrm{a.e. in } \bR^d,\ \ \forall t\in[0,T),}\label{eq-thm-2-u}
  \end{align}
  with two constants $c_0>0$ and $c_1>0$.
\end{thm}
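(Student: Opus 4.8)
The plan is to obtain $(u,\psi)$ as a monotone limit of solutions $(u^n,\psi^n)$ to the BSPDE \eqref{BSPDE-singlr} with the singular terminal condition $+\infty$ replaced by the finite constant $n$, for which well-posedness and a comparison principle are available from the appendix (the degenerate semilinear theory recalled in Theorem \ref{prop-bspde-DTZ}). First I would verify that $F(t,y,\cdot)$ is monotone decreasing in $r$ on $\bR_+$ and that $F(t,y,0)=\lambda_t(y)\geq 0$ is bounded, so the truncated equations are well-posed in the weighted spaces and comparison gives $0\leq u^n_t\leq u^{n+1}_t$ pointwise a.e., and moreover $u^n_t\leq \bar u_t$ for any nonnegative supersolution of the singular problem. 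This yields a well-defined limit $u_t(y)=\lim_n u^n_t(y)$.

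Second, I would establish the two-sided bound \eqref{eq-thm-2-u}, which is the crucial a priori control. For the upper bound one constructs an explicit supersolution of the form $c_1/(T-t)$: because $\eta\geq\kappa$ and $\gamma\geq 0$, the nonlinear term $F$ is bounded above by $\lambda_t(y)-r^2/\eta_t(y)\leq \Lambda - r^2/\Lambda$ for $t$ near $T$, and the ODE $-\dot\phi=\Lambda-\phi^2/\Lambda$ has the barrier $\phi(t)=c_1/(T-t)$ for suitable $c_1$; comparison against each $u^n$ then passes to the limit. For the lower bound one uses that $F(t,y,r)\geq -r^2/\eta_t(y)-r\int_{\mathcal Z}\mu(dz)/\text{(something)}$... more carefully, one bounds $F$ from below by a function that is at most quadratic in $r$ with the leading coefficient controlled by $1/\essinf\eta$, producing a subsolution $c_0/(T-t)$; here one must be careful that $\mu(\mathcal Z)$ or $\int \mu(dz)/\gamma$ need not be finite, so the lower barrier should be built from the term $-r^2/\eta$ and the positivity $\lambda\geq$ (something), and the bound $u^n\geq $ the solution with terminal value $n$ of the spatially-frozen ODE. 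Taking $n\to\infty$ after $t$ is bounded away from $T$ gives $u_t\geq c_0/(T-t)$.

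Third, with \eqref{eq-thm-2-u} in hand I would derive the regularity: on each interval $[0,T-\delta]$ the $u^n$ are uniformly bounded, so the degenerate BSPDE estimates of Theorem \ref{prop-bspde-DTZ} give uniform bounds on $(\theta u^n, \theta\psi^n+\sigma^*D(\theta u^n))$ in $S^w_{\sF}([0,T-\delta];H^{1,2})$ and, invoking Condition (H.3) with its exponent $p_0$, in $C^w_{\sF}([0,T-\delta];H^{1,p})$ for $p<p_0$; the weight $\theta$ and \eqref{def-weight-func} ensure the driver lies in the required $L^p$ spaces. The key point is to make these bounds blow up no faster than $(T-t)^{-\alpha}$ as $\delta\to0$ for some $\alpha\in(1,2)$: one differentiates the equation (formally) in $y$, tests against $(T-t)^{2\alpha}\theta^2 D u^n$, and uses the quadratic structure of $F$ together with the upper bound $u^n\leq c_1/(T-t)$ to absorb the nonlinear contributions; the gradient of $F$ in $r$ produces a factor $2u^n/\eta \lesssim (T-t)^{-1}$, which is exactly why $\alpha>1$ is needed and why the weighted quantity $(T-t)^\alpha(\theta u^n,\ldots)$ stays bounded uniformly in $n$. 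Passing $n\to\infty$, using weak-* compactness and the monotone pointwise convergence of $u^n$, identifies the limit $(u,\psi)$ as a solution in the sense of Definition \ref{defn-solution}, with $\lim_{\tau\to T-}u_\tau=+\infty$ a.e. following from the lower bound.

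\textbf{Main obstacle.} I expect the hard part to be the uniform-in-$n$ gradient estimate with the sharp blow-up rate $(T-t)^{-\alpha}$, $\alpha\in(1,2)$ — this is precisely the novelty flagged in Remark \ref{rmk-gradient-estimate}. Unlike the super-parabolic case, there is no interior regularization from the second-order operator, so the estimate must come entirely from testing the differentiated equation and carefully balancing the $(T-t)$-weights against the quadratic nonlinearity; tracking the dependence on $\essinf\eta$ versus $\esssup\eta$ is what forces the quantitative Condition (H.3), and getting $\alpha$ strictly below $2$ (so that the terminal growth is integrable enough for later arguments) requires exploiting the sign of $F$ rather than just its size.
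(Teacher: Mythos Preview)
Your overall scheme---monotone approximation by finite-terminal-value problems, two-sided barriers via comparison with spatially constant ODE solutions, and a weighted-in-time gradient estimate passed to the limit by weak compactness---matches the paper's, and your identification of the gradient estimate as the crux is correct.

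The gap is in the mechanism you propose for that estimate. You write that one ``uses the upper bound $u^n\leq c_1/(T-t)$ to absorb the nonlinear contributions'' and treat $\partial_r F\sim 2u^n/\eta\lesssim (T-t)^{-1}$ as a size to be controlled by the weight. That cannot close: a zeroth-order coefficient of order $(T-t)^{-1}$ in a backward equation yields, via Gronwall, a factor $\exp\bigl(C\int_t^T(T-s)^{-1}\,ds\bigr)=+\infty$, no matter what $\alpha$ you choose. The paper's argument runs in the opposite direction: the term $-2u^N/\eta$ coming from $\partial_r(-r^2/\eta)$ has the \emph{favorable sign} when paired with $|DQ^N|^2$, and one invokes the \emph{lower} barrier on $u^N$ (with its explicit constant) to show that this negative contribution dominates the positive $\alpha_2(T-t)^{-1}$ produced by differentiating the time weight. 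Concretely, with $Q^N_t=(\kappa_1/N+\delta^N(T-t))^{\alpha_2}v^N_t$ the paper verifies the pointwise monotonicity inequality
\[
\alpha_2\delta^N\Bigl(\tfrac{\kappa_1}{N}+\delta^N(T-t)\Bigr)^{-1}-\frac{2\theta^{-1}v^N_t(y)}{\eta_t(y)}\leq 0,
\]
which feeds into a Lipschitz-free a priori bound (Corollary~\ref{cor-appdx-estim}) rather than a Gronwall argument. Checking this sign is precisely where (H.3) enters: one needs $2\kappa_1/\eta\geq 2\alpha_0=\alpha_1\alpha_2>\alpha_2$ near $T$, i.e.\ $\essinf\eta/\esssup\eta\geq 1-\tfrac{1}{2p_0}$. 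The constraint $\alpha_2<2$ with $(2-\alpha_2)p_1<1$ is then forced by $L^{p_1}$-integrability of the residual terms (those involving $D\eta$ and $D\theta^{-1}$), and for these the paper needs the sharper $N$-dependent upper barrier $v^N_t\leq C\theta\bigl(N^{-1}+(T-t)\bigr)^{-1}$, not merely $c_1/(T-t)$.

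Two smaller points your sketch misses: the paper's time weight carries the $\kappa_1/N$ shift so that $Q^N_T=\kappa_1^{\alpha_2}N^{1-\alpha_2}\theta$ is finite for each $N$ and tends to zero, whereas your $(T-t)^\alpha u^n$ does not see the terminal value at all; and the passage to the limit goes through Mazur's lemma on convex combinations to upgrade weak to strong convergence before identifying the limiting equation.
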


To prove the above theorem we shall identify a solution as an accumulation point of a convex combination subsequence of the sequence of solutions to BSPDE (\ref{BSPDE-singlr-N}).  For $p_1\in [2,p_0)$ this BSPDE has a unique solution $(v^N,\zeta^N)$, due to Proposition \ref{prop-N}. The sequence $\{v^N\}$ increases in $N$, due to Corollary \ref{cor-prop-N} and hence converges to some limit $v$. To see that $v$ satisfies the growth condition (\ref{eq-thm-2-u}) we replace the coefficients $(\lambda,\gamma,\eta)$ by their lower bound $(0,0,\kappa)$ and upper bound $(\Lambda,+\infty,\Lambda)$, respectively, deduce from Proposition \ref{prop-N} that the resulting BSPDEs have unique solutions given by 
 $$
 \hat{u}^N_t(y){\triangleq}
 \frac{\kappa\mu(\mathcal{Z})\theta(y)}
    {\Big(1+\frac{\kappa\mu(\mathcal{Z})}{N}\Big)e^{\mu(\mathcal{Z})(T-t)}-1}
 \quad  \textrm{and}\quad
 \tilde{u}^N_t(y){\triangleq}\frac{2\Lambda\theta(y)}
 {1-\frac{N-\Lambda}{N+\Lambda}\cdot e^{-2(T-t)}}-\Lambda \theta(y){,}
 $$
 and then apply the comparison principle to conclude that a.s.:
 \[
 	\hat{u}^N_t(y) \leq v^N_t(y) \leq  \tilde{u}^N_t(y)\quad{\textrm{a.e. in }\bR^d,\ \ \forall t\in[0,T).}
 \]
 Since
 $$
 \tilde{u}^N_t
 \leq
 \frac{2\Lambda\theta(y)}
 {1-\frac{N-\Lambda}{N+\Lambda}\cdot e^{-2(T-t)}}
 =
 \frac{2\Lambda\theta(y)e^{2(T-t)}}
 {e^{2(T-t)}-\frac{N-\Lambda}{N+\Lambda}}
 \leq
 \frac{2\Lambda\theta(y)e^{2(T-t)}}
 {1+2(T-t)-\frac{N-\Lambda}{N+\Lambda}}
 \leq
 \frac{\theta(y)e^{2T}}
 {\frac{1}{N+\Lambda}+\frac{T-t}{\Lambda}}
 $$
we see that
  \begin{align}
    \frac{\kappa \mu(\mathcal{Z})\theta(y)}
    {\Big(1+\frac{\kappa\mu(\mathcal{Z})}{N}\Big)e^{\mu(\mathcal{Z})(T-t)}-1}
\leq v^N_t(y)
    \leq
     \frac{\theta(y)e^{2T}}{\frac{1}{N+\Lambda}+\frac{T-t}{\Lambda}}\quad \textrm{a.e. in }\bR^d\label{est-prf-thm-exis-N-T0}
  \end{align}
and hence that $v$ satisfies the desired growth condition. The slightly sharper upper bound ${\cal O}\left(\frac{1}{N^{-1} + T-t}\right)$ for $v^N_t$ will be important for the proof of Lemma \ref{lem-grad-esti} below and hence for the gradient estimate. 

Our next goal is to prove a uniform bound for  the sequence $\{D v^N\}$ in $H^{0,p}$. As a byproduct we obtain a bound for the sequence  $\{\zeta^N + \sigma^* D v^N\}$ in $\cL^2(0,T;H^{1,2})$. 
The bound given in Theorem \ref{prop-bspde-DTZ} (ii) depends on the Lipschitz constant of the driver of the BSPDE. In our case, this means that it depends on the function $v^N$, due to the quadratic dependence of the driver on $v^N$. The following corollary provides a better estimate. The estimates in Theorem \ref{prop-bspde-DTZ} are obtained by applying It\^o formulas directly (see \cite{DuTangZhang-2013,Du_Zhang_DegSemilin2012}); hence we can derive the estimates {as well} from the monotonicity of the drift for BSPDE \eqref{BSPDE-dqiutang} instead of the Lipschitz condition. The detailed proof is omitted; it is standard but cumbersome.
\begin{cor}\label{cor-appdx-estim}
  Assume the same hypothesis of Theorem \ref{prop-bspde-DTZ} with $$f(\cdot,\cdot,0)\in \cL^p_{\sF}(0,T;H^{1,p} )\cap \cL^2_{\sF}(0,T; H^{1,2}) \textrm{ and }
  G\in L^p(\Omega,\sF_T;H^{1,p})\cap L^2(\Omega,\sF_T; H^{1,2})
  $$
    for some $p\in [2,\infty)$. Let $(u,\phi)$ be the solution of BSPDE \eqref{BSPDE-dqiutang} in Theorem \ref{prop-bspde-DTZ}. If there exist constant $L_1$ and function $g\in\cL^p_{\sF}(0,T;H^{0,p})\cap \cL^2_{\sF}(0,T;L^2)$ s.t. {a.e.} in $ \Omega\times[0,T]\times\bR^d$,
    \begin{align}
    &u_s(y)f(s,y,u_s(y))+\sum_{i=1}^d\partial_{y^i}u_s(y) \left(\partial_{y^i}+\partial_{y^i}u_s(y)\partial_{u}\right)f(s,y,u_s(y))
    \nonumber\\
    &\leq
    |g_s(y)|^2+L_1\Big( |u_s(y)|^2+ \sum_{i=1}^d|\partial_{y^i}u_s(y)|^2 \Big),
    \label{relation-cor-est-appdx}
    \end{align}
then we have
$$
    {E}\sup_{t\in[0,T]}\|u_t\|_{H^{1,p}}^{p}
    \leq C'_p E\bigg[  \|G\|_{H^{1,p}}^p +\int_0^T\!\! \|g_t\|_{H^{0,p}}^p dt  \bigg],
$$
and
  \begin{eqnarray*}
    {E}\sup_{t\in[0,T]}\|u_t\|_{H^{1,2}}^{2} + {E}\int_{0}^{T}\!\!\|\phi_t +
    \sigma_t^* Du_t\|_{H^{1,2}}^{2}\,dt
    \leq C_2'\, {E} \bigg[\|G\|_{H^{1,2}}^{2}
    + \int_{0}^{T}\!\!\|g_t\|^{2}dt\bigg]
    \end{eqnarray*}
with the constants $C'_2=C'_2(d,m,\Lambda,L,T,L_1)$ and $C'_p=C'_p(d,m,\Lambda,L,T,L_1,p)$ independent of the Lipchitz constant $L_0$.
\end{cor}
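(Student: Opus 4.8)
The plan is to revisit the proof of Theorem~\ref{prop-bspde-DTZ}\,(ii) and to replace the single step in which the Lipschitz constant $L_0$ enters by the structural inequality \eqref{relation-cor-est-appdx}. Recall that the estimates in that theorem are obtained by It\^o's formula applied to $\|u_t\|_{H^{1,2}}^2$ and to $\|u_t\|_{H^{1,p}}^p$, after differentiating BSPDE \eqref{BSPDE-dqiutang} once in space; equivalently, one works with $w_t\triangleq(u_t,Du_t)\in\bR^{1+d}$, which solves a degenerate BSPDE system, linear in its martingale integrand, whose driver $\hat f_s(y)$ has $u$-component $f(s,y,u_s(y))$ and $\partial_{y^i}u$-component $(\partial_{y^i}+\partial_{y^i}u_s(y)\,\partial_u)f(s,y,u_s(y))$. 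Hence the pointwise inner product $w_s(y)\cdot\hat f_s(y)$ is exactly the left-hand side of \eqref{relation-cor-est-appdx} and is therefore bounded by $|g_s(y)|^2+L_1|w_s(y)|^2$. As usual, the It\^o formulas below are first justified for smooth data and the general case follows by the approximation contained in Theorem~\ref{prop-bspde-DTZ}.

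For the $H^{1,2}$ estimate, apply It\^o's formula to $t\mapsto\|u_t\|_{H^{1,2}}^2$. Integrating the second-order operator by parts and combining it with the quadratic variation of the martingale part yields, on the favourable side, the nonnegative ``hidden-regularity'' quantity $\int_t^T\|\phi_s+\sigma_s^*Du_s\|_{H^{1,2}}^2\,ds$, together with error terms controlled by $C(d,m,\Lambda,L)\int_t^T\|u_s\|_{H^{1,2}}^2\,ds$ coming from the derivatives of $\sigma$ and $b$, exactly as in \cite{DuTangZhang-2013,Du_Zhang_DegSemilin2012}; the driver contributes $2\int_t^T\!\!\int_{\bR^d}w_s(y)\cdot\hat f_s(y)\,dy\,ds\le 2\int_t^T\big(\|g_s\|^2+L_1\|u_s\|_{H^{1,2}}^2\big)\,ds$ by \eqref{relation-cor-est-appdx}. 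Taking expectations, applying Gronwall's inequality in $t$ and then the Burkholder--Davis--Gundy inequality to the martingale term gives the asserted bound with $C_2'=C_2'(d,m,\Lambda,L,T,L_1)$, which does not involve $L_0$.

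For the $L^p$ estimate, apply the $L^p$ It\^o formula to $t\mapsto\|w_t\|_{H^{0,p}}^p=\int_{\bR^d}(|u_t|^2+|Du_t|^2)^{p/2}\,dy$. Integration by parts against the weight $|w_t|^{p-2}$, together with the quadratic variation correction, makes the second-order terms and all the terms containing the martingale integrand combine into a nonpositive contribution (by completing the square, as in \cite{DuTangZhang-2013}), up to error terms bounded by $C(d,m,\Lambda,L,p)\|w_t\|_{H^{0,p}}^p$; this is the computation that is ``standard but cumbersome'', the key point being that after the square completion the martingale integrand is eliminated and is never controlled in terms of $L_0$. The driver contributes $p\int_{\bR^d}|w_t|^{p-2}\,w_t\cdot\hat f_t\,dy\le p\int_{\bR^d}|w_t|^{p-2}\big(|g_t|^2+L_1|w_t|^2\big)\,dy$, and by H\"older's and Young's inequalities $p\int_{\bR^d}|w_t|^{p-2}|g_t|^2\,dy\le \tfrac12\|w_t\|_{H^{0,p}}^p+C_p\|g_t\|_{H^{0,p}}^p$, which is absorbed. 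Gronwall's inequality and the Burkholder--Davis--Gundy inequality then yield $E\sup_{t\in[0,T]}\|u_t\|_{H^{1,p}}^p\le C_p'\,E\big[\|G\|_{H^{1,p}}^p+\int_0^T\|g_t\|_{H^{0,p}}^p\,dt\big]$ with $C_p'$ independent of $L_0$, using $w_T=(G,DG)$ and the equivalence of $\|w_t\|_{H^{0,p}}$ with $\|u_t\|_{H^{1,p}}$. The only real difficulty is this last, purely technical, computation with the degenerate second-order coefficients and the $\phi$-terms; the role of hypothesis \eqref{relation-cor-est-appdx} is precisely to trade the lone $L_0$-dependent bound for an $L_1$-dependent one.
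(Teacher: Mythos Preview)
Your proposal is correct and follows exactly the approach the paper indicates: the paper in fact omits the detailed proof, merely noting that the estimates in Theorem~\ref{prop-bspde-DTZ} are obtained by applying It\^o formulas directly (citing \cite{DuTangZhang-2013,Du_Zhang_DegSemilin2012}) and that one can therefore derive the same estimates from the monotonicity of the drift instead of the Lipschitz condition, the detailed computation being ``standard but cumbersome''. Your write-up fleshes out precisely this argument --- identifying that the pointwise inner product $w_s\cdot\hat f_s$ is the left-hand side of \eqref{relation-cor-est-appdx} and using this to replace the single $L_0$-dependent step --- so there is nothing to add or correct.
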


We  proceed with the gradient estimate. Since we are mainly interested in the behavior of the gradient near the terminal time, we put
\[
 	\kappa_1{\triangleq}\essinf_{(\omega,t,y)\in\Omega \times [T_0,T]\times\bR^d} \eta_t(y)
\]
and notice that (\ref{est-prf-thm-exis-N-T0}) holds with $\kappa$ replaced by $\kappa_1$ on $[T_0,T]$. {The following lemma is key to the gradient estimate. }

\begin{lem} \label{lem-grad-esti}
Recall the constant $p_0$ introduced in (H.3), let $\alpha_0\triangleq1-\frac{1}{2p_0}$, and choose $\alpha_1,\alpha_2\in (1,\infty)$ and $p_1 \in [2,p_0)$ s.t.
  $$
  2\alpha_0=\alpha_1\alpha_2  \quad \textrm{ and }\quad (2-\alpha_2)p_1<1.
  $$
 Let $T_1\in[T_0,T)$ and $N_0>2\Lambda+{\kappa}\mu(\mathcal{Z})$ s.t.
  $$
  \bigg(1+\frac{\kappa_1\mu(\mathcal{Z})}{N_0}\bigg)e^{\mu(\mathcal{Z})(T-T_1)}
  <\alpha_1,
  $$
and for each $N>N_0$, set
  $${\delta^N}{\triangleq}\bigg(1+\frac{\kappa_1\mu(\mathcal{Z})}{N}\bigg)
  e^{\mu(\mathcal{Z})(T-T_1)}.$$
 Then the  sequence
 $$
(Q^N_t,\xi^N_t){\triangleq}\bigg(\frac{\kappa_1}{N}+{\delta^N}(T-t)\bigg)^{\alpha_2}(v^N_t,\zeta^N_t), \quad \textrm{$t\in[0,T]$},
$$
satisfies
 \begin{align*}
  \sup_{N>N_0}
  \bigg\{
  E\Big[\sup_{t\in[T_1,T]}
  \left(\|Q^N_t\|^2_{H^{1,2}}+\|Q^N_t\|^{p_1}_{H^{1,{p_1}}}\right)\Big]
  +
  \|\sigma^*DQ^N+\xi^N\|^2_{\cL^2(T_1,T;H^{1,2})}
  \bigg\}<\infty{.}
\end{align*}
\end{lem}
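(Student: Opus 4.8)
The plan is to apply the improved a priori estimate of Corollary \ref{cor-appdx-estim} to the rescaled pair $(Q^N,\xi^N)$, whose key feature is that the singular quadratic term in the driver gets tamed by the weight $(\kappa_1/N+\delta^N(T-t))^{\alpha_2}$. First I would compute the BSPDE satisfied by $(Q^N,\xi^N)$: writing $m^N_t\triangleq \kappa_1/N+\delta^N(T-t)$, since $v^N$ solves \eqref{BSPDE-singlr-N} we get $-dQ^N_t = \big[(m^N_t)^{\alpha_2}(\text{linear part in }v^N) + (m^N_t)^{\alpha_2}F^N(t,y,v^N_t) + \alpha_2\delta^N (m^N_t)^{\alpha_2-1}v^N_t\big]\,dt - \xi^N_t\,dW_t$, with a new terminal value $(m^N_T)^{\alpha_2}v^N_T = (\kappa_1/N)^{\alpha_2}v^N_T$, which is finite (since BSPDE \eqref{BSPDE-singlr-N} has finite terminal data) and whose $H^{1,p_1}\cap H^{1,2}$-norm I would bound uniformly in $N$ using the structure of the terminal condition of \eqref{BSPDE-singlr-N}. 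The extra zeroth-order term $\alpha_2\delta^N(m^N_t)^{\alpha_2-1}v^N_t$ is, by the upper bound in \eqref{est-prf-thm-exis-N-T0} with $\kappa$ replaced by $\kappa_1$, controlled: $v^N_t \leq \theta(y)e^{2T}/(\frac1{N+\Lambda}+\frac{T-t}{\Lambda})$, which is comparable to $\theta(y)/m^N_t$ up to constants on $[T_1,T]$, so this term is $O(\theta(y)(m^N_t)^{\alpha_2-2})$, and $(m^N_t)^{\alpha_2-2}$ is $p_1$-integrable in $t$ precisely because of the condition $(2-\alpha_2)p_1<1$. Similarly the negative nonlinearities in $F^N$ have the right sign, while $\lambda_t(y)$ contributes a bounded term times $(m^N_t)^{\alpha_2}$.

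The core of the argument is verifying the monotonicity-type inequality \eqref{relation-cor-est-appdx} for the driver $\tilde f^N$ of the $(Q^N,\xi^N)$-equation, with an $L_1$ and a function $g$ that are uniform in $N$. Writing $\tilde f^N(s,y,r) = (m^N_s)^{\alpha_2}\,[\text{generator acting on }r] + \alpha_2\delta^N(m^N_s)^{\alpha_2-1}\,(m^N_s)^{-\alpha_2}r\cdot\ldots$ — more precisely I would keep $Q^N=(m^N)^{\alpha_2}v^N$ as the unknown and express everything through $Q^N$ — the dangerous term is the one coming from $-v^N\cdot (v^N)^2/\eta$ type contributions: in terms of $Q^N$ these become $-(m^N_s)^{-2\alpha_2}(Q^N_s)^3/\eta_s$ and its $y$-derivative analogues, which are the terms that must be absorbed. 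Here is where Condition (H.3) enters: the choice $2\alpha_0=\alpha_1\alpha_2$ with $\delta^N<\alpha_1$ and the bound $v^N\le c\,\theta/m^N$ together give $(m^N_s)^{-2\alpha_2}(Q^N_s)^2 = (m^N_s)^{-2\alpha_2}(m^N_s)^{2\alpha_2}(v^N_s)^2 = (v^N_s)^2 \le c\,\theta^2(m^N_s)^{-2}$, and combined with the lower bound $\eta_s\ge\kappa_1$ and the upper bound $\eta_s\le \kappa_1/\alpha_0$ on $[T_0,T]$ from (H.3), the bad cubic term is dominated — after differentiating — by a constant multiple of $|Q^N_s|^2+\sum_i|\partial_{y^i}Q^N_s|^2$ plus an $H^{0,p_1}$-function built from $\theta$, $D\theta$. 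I would carry out the differentiation of $\tilde f^N$ in $y$ and $r$ as required by the left-hand side of \eqref{relation-cor-est-appdx}, noting the first-order derivatives of $b,\eta,\lambda$ are bounded by $L$ (H.2) and $\gamma\ge 0$ gives the right sign for the $\mathcal Z$-integral term.

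Once \eqref{relation-cor-est-appdx} holds with uniform constants, Corollary \ref{cor-appdx-estim} (applied on the interval $[T_1,T]$, with terminal data $(m^N_T)^{\alpha_2}v^N_T$ and the above $g$) immediately yields
$$
E\sup_{t\in[T_1,T]}\|Q^N_t\|^{p_1}_{H^{1,p_1}} + E\sup_{t\in[T_1,T]}\|Q^N_t\|^2_{H^{1,2}} + \|\sigma^*DQ^N+\xi^N\|^2_{\cL^2(T_1,T;H^{1,2})} \le C
$$
with $C$ independent of $N$, which is the claim. I expect the main obstacle to be the bookkeeping in the monotonicity inequality: tracking all the terms produced by $\partial_{y^i}$ and $\partial_u$ acting on the rescaled driver — in particular the cross terms between $D\theta$-weights, the $\alpha_2\delta^N(m^N)^{\alpha_2-1}$-contribution and the cubic nonlinearity — and checking that every surviving non-sign-definite term is either $O(|Q^N|^2+|DQ^N|^2)$ with a constant not depending on $N$, or lies in a fixed ball of $\cL^{p_1}(T_1,T;H^{0,p_1})\cap\cL^2(T_1,T;L^2)$; the time-integrability of the residual weights $(m^N_t)^{\alpha_2-2}$ and $(m^N_t)^{-2\alpha_2}(Q^N)^2(m^N_t)^{\alpha_2}=O((m^N_t)^{\alpha_2-2})\cdot\ldots$ is exactly what forces the constraint $(2-\alpha_2)p_1<1$, and making this precise is the delicate point.
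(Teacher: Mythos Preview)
Your overall strategy---derive the BSPDE for $(Q^N,\xi^N)$ and apply Corollary~\ref{cor-appdx-estim} with data uniform in $N$---is exactly the paper's. But the key mechanism is misidentified, and as written your plan would not close.

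The gap is in how you treat the extra term $f^2_N(t,y)\triangleq\alpha_2\delta^N(m^N_t)^{\alpha_2-1}v^N_t(y)$. You propose to bound it pointwise by $C\theta(y)(m^N_t)^{\alpha_2-2}$ via the \emph{upper} bound on $v^N$ and then put it into $g$. That works for the term $Q^N\!\cdot\! f^2_N$ in \eqref{relation-cor-est-appdx}, but it fails for the gradient part: written as a function of $Q^N$, $f^2_N=\alpha_2\delta^N(m^N_t)^{-1}Q^N$, so the contribution to $\partial_{y^i}Q^N\cdot(\partial_{y^i}+\partial_{y^i}Q^N\partial_u)f$ is
\[
\alpha_2\delta^N(m^N_t)^{-1}\,|\partial_{y^i}Q^N_t|^2,
\]
whose coefficient blows up like $N$ as $t\to T$. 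This cannot be absorbed into $L_1|\partial_{y^i}Q^N|^2$ with $L_1$ independent of $N$, and it cannot go into $|g|^2$ because $g$ must not depend on the unknown gradient. You have also reversed the roles: the quadratic/``cubic'' term $-(m^N_t)^{\alpha_2}\theta^{-1}(v^N)^2/\eta$ is not the dangerous term to absorb---it is the \emph{good} sign-definite term that does the absorbing.

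What the paper actually does is combine the two: the quadratic nonlinearity contributes $-\,2\theta^{-1}v^N_t\,|\partial_{y^i}Q^N_t|^2/\eta_t$, and the decisive step is the pointwise inequality
\[
\alpha_2\delta^N(m^N_t)^{-1}-\frac{2\theta^{-1}v^N_t(y)}{\eta_t(y)}\leq 0\quad\text{on }[T_1,T],
\]
which follows from the \emph{lower} bound on $v^N$ in \eqref{est-prf-thm-exis-N-T0} (with $\kappa_1$), the elementary inequality $e^x\leq 1+xe^x$, and the arithmetic $\delta^N<\alpha_1$, $\alpha_1\alpha_2=2\alpha_0$, $\eta_t\leq\kappa_1/\alpha_0$ from (H.3). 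Only the residual term $f^3_N$, coming from $\partial_{y^i}\eta$ and $\partial_{y^i}\theta^{-1}$, is estimated via the upper bound on $v^N$ and the condition $(2-\alpha_2)p_1<1$; that is where $(2-\alpha_2)p_1<1$ is actually used. So your bookkeeping worry is real, but the resolution is a cancellation driven by the lower bound, not a direct estimate driven by the upper bound.
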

\begin{proof}
A direct computation shows that the sequence $\{(Q^N,\xi^N)\}$ is a solution to the BSPDE:
  \begin{equation}\label{BSPDE-singlr-N-tmTrfm}
  \left\{\begin{array}{l}
  \begin{aligned}
  -dQ^N_t(y)=\, &\bigg[\textrm{tr}\left( \frac{1}{2}\sigma_t\sigma^*_tD^2 Q^N_t(y)
        + D\xi^N_t\sigma^{*}_t(y)\right)
        +\tilde{b}^{*}_tD Q^N_t(y)
+\beta_t^{*}\xi^N_t(y)+c_tQ^N_t(y)\\
       +\bigg(&\frac{\kappa_1}{N}+{\delta^N}(T-t)\bigg)^{\alpha_2}
        \bigg(
        \theta\lambda_t(y)
        -\int_{\mathcal{Z}}\frac{\theta^{-1} |v^N_t(y)|^2}{\gamma_t(y,z)+\theta^{-1} |v^N_t(y)|}\mu(dz)
        - \frac{\theta^{-1}\left|v^N_t(y)\right|^2}{\eta_t(y)}
         \bigg)\\
         +&\alpha_2{\delta^N}
         \bigg(\frac{\kappa_1}{N}+{\delta^N}(T-t)\bigg)^{\alpha_2-1}v^N_t(y)
        \bigg]\, dt-\xi^N_t(y)\, dW_{t}, \quad (t,y)\in [0,T]\times \bR^d;\\
    Q^N_T(y)=\,& \kappa_2^{\alpha_1}N^{1-\alpha_2}\theta(y)  \quad \textrm{for $y\in\bR^d$}.
    \end{aligned}
  \end{array}\right.
\end{equation}

The assertion follows if we can show that this BSPDE satisfies the the conditions of Corollary  \ref{cor-appdx-estim} on $[T_1,T]$ with some constant $L_1 < \infty$ independent of $N$ and a function $g_N\in\cL^p(T_1,T;H^{0,p})$ which satisfies
\begin{align*}
 \sup_{N>N_0}\|g_N\|_{\cL^{p}_{\sF}(T_1,T;H^{0,p})} <\infty \quad \text{for } p\in\{2,p_1\}.
 \end{align*}
To obtain the desired result it suffices to estimate $\partial_{y^i}Q_t^N(y)\partial_{y^i} (f^2_N-f^1_N)(s,y)$ where
$$
f_N^1(t,y)\,{\triangleq}\,\bigg(\frac{\kappa_1}{N}+{\delta^N}(T-t)\bigg)^{\alpha_2}
\frac{\theta^{-1}\left|v^N_t(y)\right|^2}{\eta_t(y)}
\quad\textrm{and}\quad
f_N^2(t,y)\,{\triangleq}\,\alpha_2{\delta^N}
\bigg(\frac{\kappa_1}{N}+{\delta^N}(T-t)\bigg)^{\alpha_2-1}v^N_t(y).
$$
To this end, notice that $\delta^N< \alpha_1$ and that $e^x\leq 1+xe^x$ for any $x\geq 0$. Hence, for each $N>N_0$, each $t\in[T_1,T)$ and  almost every $y\in\bR^d$ one has:
\begin{align}
    0\geq\,{\alpha_2{\delta^N}}\bigg(\frac{\kappa_1}{N}+{\delta^N}(T-t)\bigg)^{-1}
    -\frac{2\theta^{-1}v^N(y)}{\eta_t(y)}, \label{eq-key-monotn}
\end{align}
because
  \begin{align}
    &{\alpha_2{\delta^N}}\bigg(\frac{\kappa_1}{N}+{\delta^N}(T-t)\bigg)^{-1}
    -\frac{2\theta^{-1}v^N(y)}{\eta_t(y)}\nonumber\\
    \leq &\,
    \alpha_1\alpha_2\bigg(\frac{\kappa_1}{N}+{\delta^N}(T-t)\bigg)^{-1}
    -\frac{2\kappa_1\mu(\mathcal{Z})/\eta_t(y)}
    {\Big(1+\frac{\kappa_1\mu(\mathcal{Z})}{N}\Big)e^{\mu(\mathcal{Z})(T-t)}-1}
    \nonumber\\
    \leq &\,
    \alpha_1\alpha_2\bigg(\frac{\kappa_1}{N}+{\delta^N}(T-t)\bigg)^{-1}
    -{2\alpha_0}{\mu(\mathcal{Z})}
    \bigg( \Big(1+\frac{\kappa_1\mu(\mathcal{Z})}{N}\Big)e^{\mu(\mathcal{Z})(T-t)}-1  \bigg)^{-1}\nonumber\\
    \leq &\,0. \label{eq-monotone-verfy}
 \end{align}
Thus, for any $t\in[T_1,T]$, we have
\begin{align*}
  &\partial_{y^i}Q_t^N(y)\partial_{y^i}
  (f^2_N-f^1_N)(s,y)
  \\
  =&\,\partial_{y^i}Q_t^N(y)
  \alpha_2{\delta^N}
\bigg(\frac{\kappa_1}{N}+{\delta^N}(T-t)\bigg)^{-1}\partial_{y^i}Q^N_t(y)
  -\partial_{y^i}Q_t^N(y)
\frac{2\theta^{-1}v^N_t(y)\partial_{y^i}Q_t^N}{\eta_t(y)}
+\partial_{y^i}Q_t^N(y)f_N^3(t,y)
\\
=&\,
\left|\partial_{y^i}Q_t^N(y)\right|^2\bigg(
  \alpha_2{\delta^N}
\bigg(\frac{\kappa_1}{N}+{\delta^N}(T-t)\bigg)^{-1}
  -
\frac{2\theta^{-1}v^N_t(y)}{\eta_t(y)}\bigg)
+\partial_{y^i}Q_t^N(y)f_N^3(t,y)
\\
\leq&\,\left|\partial_{y^i}Q_t^N(y)\right|^2+\left|f_N^3(t,y)\right|^2 \quad
\text{a.e. in $\bR^d$ a.s.},
\end{align*}
where $$f_N^3(t,y)\triangleq
\bigg(\frac{\kappa_1}{N}+{\delta^N}(T-t)\bigg)^{\alpha_2}
 \frac{\theta^{-1}\left|v^N_t(y)\right|^2}{\eta_t(y)}
 \bigg(\frac{\partial_{y^i}\eta_t(y)}{\eta_t(y)}
 -\partial_{y^i}\theta^{-1}(y)\theta(y) \bigg).$$
In view of the upper bound in \eqref{est-prf-thm-exis-N-T0} there exists a constant $C < \infty$ s.t.
\[
	\theta^{-1}|v^N_t(y)|^2 \leq C \left( \frac{1}{\frac{1}{N}+ T-t} \right)^2\theta(y).
\]
Since $(2-\alpha_2)p_1<1$, one therefore has  for $p\in\{2,p_1\}$ that
\begin{align*}
\sup_{N>N_0}\|f_N^3\|_{\cL^{p}_{\sF}(T_1,T;H^{0,p})}
<\infty.
\end{align*}
This proves the assertion.
\end{proof}


\begin{cor}
The previous lemma{,} along with
an application of Theorem \ref{prop-bspde-DTZ} to the time interval $[0,T_1]$, leads to the desired gradient estimate:
\begin{align}
  \sup_{N>N_0}
  \bigg\{
  E\Big[\sup_{t\in[0,T]}
  \left(\|Q^N_t\|^2_{H^{1,2}}+\|Q^N_t\|^{p_1}_{H^{1,{p_1}}}\right)\Big]
  +
  \|\sigma^*DQ^N+\xi^N\|^2_{\cL^2(0,T;H^{1,2})}
  \bigg\}<\infty.\label{bounded-est-prf-exst}
\end{align}
\end{cor}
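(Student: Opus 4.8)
The plan is to patch together the estimate on $[T_1,T]$ obtained in Lemma~\ref{lem-grad-esti} with a bound on $[0,T_1]$ coming from the general theory of degenerate BSPDEs. First I would observe that the process $(Q^N,\xi^N)$ defined on $[0,T]$ solves the BSPDE \eqref{BSPDE-singlr-N-tmTrfm}, and that Lemma~\ref{lem-grad-esti} already yields
\[
  \sup_{N>N_0}\bigg\{E\Big[\sup_{t\in[T_1,T]}\big(\|Q^N_t\|^2_{H^{1,2}}+\|Q^N_t\|^{p_1}_{H^{1,p_1}}\big)\Big]+\|\sigma^*DQ^N+\xi^N\|^2_{\cL^2(T_1,T;H^{1,2})}\bigg\}<\infty.
\]
In particular the terminal-type data at time $T_1$, namely $Q^N_{T_1}$, is bounded uniformly in $N$ in $L^{p_1}(\Omega,\sF_{T_1};H^{1,p_1})\cap L^2(\Omega,\sF_{T_1};H^{1,2})$.

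The second step is to run \eqref{BSPDE-singlr-N-tmTrfm} backward on the fixed interval $[0,T_1]$, treating $Q^N_{T_1}$ as a genuine (finite, smooth enough) terminal condition. On $[0,T_1]$ the coefficient $\big(\tfrac{\kappa_1}{N}+\delta^N(T-t)\big)$ is bounded above and below away from $0$ uniformly in $N$ and $t$, so the driver of \eqref{BSPDE-singlr-N-tmTrfm} restricted to $[0,T_1]$ has an inhomogeneous term whose $\cL^{p_1}(0,T_1;H^{1,p_1})\cap \cL^2(0,T_1;H^{1,2})$-norm is controlled — this uses the already-established two-sided bound \eqref{est-prf-thm-exis-N-T0} for $v^N$ on $[0,T]$ (hence in particular on $[0,T_1]$, away from the singularity), exactly as in the proof of Lemma~\ref{lem-grad-esti}, together with (H.1)--(H.2) on $b,\sigma,\eta,\lambda,\gamma$. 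One then applies Theorem~\ref{prop-bspde-DTZ} (or, to avoid the dependence on the Lipschitz constant $L_0$, Corollary~\ref{cor-appdx-estim}, using the same monotonicity inequality \eqref{eq-key-monotn}--\eqref{eq-monotone-verfy}, which holds on all of $[0,T]$ since $T_0\le T_1$) on $[0,T_1]$ to obtain
\[
  E\Big[\sup_{t\in[0,T_1]}\big(\|Q^N_t\|^2_{H^{1,2}}+\|Q^N_t\|^{p_1}_{H^{1,p_1}}\big)\Big]+\|\sigma^*DQ^N+\xi^N\|^2_{\cL^2(0,T_1;H^{1,2})}\le C\Big(E\|Q^N_{T_1}\|^{p_1}_{H^{1,p_1}}+E\|Q^N_{T_1}\|^2_{H^{1,2}}+\text{(data norms)}\Big)
\]
with $C$ independent of $N$ (and of $L_0$, thanks to Corollary~\ref{cor-appdx-estim}).

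Finally I would combine the two: the right-hand side above is bounded uniformly in $N$ by Step~1 (for the $Q^N_{T_1}$ terms) and by \eqref{est-prf-thm-exis-N-T0} and (H.1)--(H.2) (for the data terms), and adding the $[T_1,T]$ and $[0,T_1]$ estimates — noting $\sup_{t\in[0,T]}=\max\{\sup_{[0,T_1]},\sup_{[T_1,T]}\}$ and that the $\cL^2$-norms over the two subintervals add up — gives precisely \eqref{bounded-est-prf-exst}. The only mild subtlety, and the main thing to be careful about, is checking that the hypotheses of Theorem~\ref{prop-bspde-DTZ}/Corollary~\ref{cor-appdx-estim} are genuinely met \emph{uniformly in $N$} on $[0,T_1]$: one must verify that the terminal datum $Q^N_{T_1}$ lies in the right spaces (this is where Step~1's $H^{1,p_1}$-bound, not merely $H^{1,2}$, is used) and that the extra zeroth-order term $\alpha_2\delta^N\big(\tfrac{\kappa_1}{N}+\delta^N(T-t)\big)^{\alpha_2-1}v^N$ and the quadratic nonlinearity contribute a driver bound independent of $N$ — both of which follow from $\delta^N<\alpha_1$, the bound $\big(\tfrac{\kappa_1}{N}+\delta^N(T-t)\big)^{-1}\le (\delta^{N_0}(T-T_1))^{-1}$ valid on $[0,T_1]$, and \eqref{est-prf-thm-exis-N-T0}. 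Since these are the same ingredients already exploited in Lemma~\ref{lem-grad-esti}, the argument is routine, and the statement follows.
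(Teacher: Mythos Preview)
Your approach is correct and matches the paper's: the corollary is obtained precisely by feeding the uniform $H^{1,2}\cap H^{1,p_1}$-bound on $Q^N_{T_1}$ from Lemma~\ref{lem-grad-esti} as terminal data into Theorem~\ref{prop-bspde-DTZ} on $[0,T_1]$, where the two-sided bound \eqref{est-prf-thm-exis-N-T0} keeps the driver's Lipschitz constant uniform in $N$ (after the same truncation used in Proposition~\ref{prop-N}). One small correction: the monotonicity inequality \eqref{eq-key-monotn}--\eqref{eq-monotone-verfy} relies on (H.3) and the $\kappa_1$-lower bound, both of which are only asserted on $[T_0,T]$, so it need not hold on $[0,T_0)$; your parenthetical alternative via Corollary~\ref{cor-appdx-estim} is therefore not available on all of $[0,T_1]$, but it is also unnecessary since the direct application of Theorem~\ref{prop-bspde-DTZ} already gives an $N$-independent constant.
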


We are now ready to prove existence of a solution to our BSPDE \eqref{BSPDE-singlr}. Estimate (\ref{bounded-est-prf-exst}) allows us to extract a subsequence  $(Q^{N_k},\xi^{N_k})$ s.t.
$Q^{N_k}$ converges to  $Q$ weakly in  $\cL^{p}(0,T;H^{1,p})$ as well as weak-star in $\cL^{\infty}(0,T;H^{1,p})$ for any $p\in\{2,p_1\}$, and  $(\xi^{N_k},\xi^{N_k}+\sigma^*DQ^{N_k})$ converges weakly to  $(\xi,\xi+\sigma^*DQ)$ in
$\cL^{2}(0,T;L^2)\times \cL^{2}(0,T;H^{1,2})$.
Since $\{v^N\}$ increases to $v$ {a.e.} in $\bR^d$ for all $t\in[0,T]$, passing to the limit we get
$$Q_t(y)=e^{\alpha_2\mu(\mathcal{Z})(T-T_1)}(T-t)^{\alpha_2}v_t(y).$$
Mazur's Lemma allows us to choose a sequence of convex combinations of $(Q^{N_k},\xi^{N_k},\xi^{N_k}+\sigma^*DQ^{N_k})$ {which} converges strongly in corresponding spaces. Therefore, it is easy to check that $(Q,\xi)$ solves:
  \begin{equation}\label{BSPDE-singlr-N-tmTrfm-limit}
  \left\{\begin{array}{l}
  \begin{aligned}
  -dQ_t(y)=\, &\bigg[\text{tr}\left( \frac{1}{2}\sigma_t\sigma_t^*
  D^2 Q_t(y)
        + D \xi_t\sigma^{*}_t(y)\right)
        +\tilde{b}^{*}_tD Q_t(y)
+\beta_t^{*}\xi_t(y)+c_tQ_t(y)\\
        +e&^{\alpha_2\mu(\mathcal{Z})(T-T_1)}(T-t)^{\alpha_2}
        \bigg(
        \theta\lambda_t(y)
        -\int_{\mathcal{Z}}\frac{\theta^{-1} |v_t(y)|^2}{\gamma_t(y,z)
        +\theta^{-1} |v_t(y)|}\mu(dz)
        - \frac{\theta^{-1}\left|v_t(y)\right|^2}{\eta_t(y)}
         \bigg)\\
         &\,  +\alpha_2
         e^{\alpha_2\mu(\mathcal{Z})(T-T_1)}(T-t)^{\alpha_2-1}v_t(y)
        \bigg]\, dt-\xi_t(y)\, dW_{t}, \quad (t,y)\in [T_1,T]\times \bR^d;\\
    Q_T(y)=\,&0,  \quad y\in\bR^d.
    \end{aligned}
  \end{array}\right.
\end{equation}
By Theorem \ref{prop-bspde-DTZ} and Proposition \ref{prop-comprn}, $(Q,\xi)$ admits a version, still denoted by $(Q,\xi)$, s.t.
$$(Q,\xi+\sigma^*DQ)\in \left(S^w_{\sF}([0,T];H^{1,2})\cap C^w_{\sF}([0,T]; H^{1,p_1})\right)\times L^{2}_{\sF}(0,T;H^{1,2}).$$
Recovering $(v,\zeta)$ from $(Q,\xi)$ and setting $(u,\psi)\,{\triangleq}\,\theta^{-1} (v,\zeta)$ we see that
%
$(u,\psi)$ solves BSPDE \eqref{BSPDE-singlr} and that
  $\{(T-t)^{\alpha_2}(\theta u_t,\theta\psi_t + \sigma^*D (\theta u_t))(y);\,(t,y)\in[0,T]\times \bR^d\}$
  belongs to
  $\left(S^w_{\sF}([0,T];H^{1,2})\cap C^w_{\sF}([0,T]; H^{1,p_1})\right)\times L^{2}_{\sF}(0,T;H^{1,2}).
  $
  Moreover, relation \eqref{eq-thm-2-u} holds with
$
c_0={\kappa}e^{-\mu(\mathcal{Z})T}$  and $ c_1=\Lambda e^{2T}$.
Since $p_1\in[2,p_0)$ is arbitrary, this completes the proof of Theorem \ref{thm-existence}.

\section{{Verification Theorem and uniqueness of solution to BSPDE \eqref{BSPDE-singlr}}}


{In this section, we prove a verification theorem which not only solves our control problem with constraint but also allows us to derive uniqueness of the solution to our BSPDE \eqref{BSPDE-singlr}. As no generalized It\^o-Kunita formula is available for the random filed $u$ satisfying BSPDE \eqref{BSPDE-singlr} in the distributional sense due to the degeneracy, the proof is instead based on the link between FBSDEs and BSPDEs given in Theorem \ref{prop-bspde-DTZ}. This link will allow us to compute the dynamics of the process $u_t(y_t) | x_t|^2$.} 

First, we recall a result from \cite{GraeweHorstQui13}. It states that the optimal control lies in the set of controls $\mathscr{A}$ for which the corresponding state process is monotone.    
\begin{lem}\label{lem-admissible-control}
  For each admissible control pair $(\xi,\rho)\in \cL^2_{\bar{\sF}}(0,T)\times \cL^{2}_{\bar{\sF}}(0,T;L^2(\mathcal{Z}))$, there exists a corresponding admissible control pair $(\hat{\xi},\hat{\rho})\in \cL^2_{\bar{\sF}}(0,T)\times \cL^{2}_{\bar{\sF}}(0,T;L^2(\mathcal{Z}))$ whose cost is {no} more than that of $(\xi,\rho)$ and for which the corresponding state process
  $x^{0,x;\hat{\xi},\hat{\rho}}$ is a.s. monotone. Moreover, there exists a constant $C>0$ which is independent of the initial data {$(0,x)$, terminal time $T$} and the control pair $(\hat{\rho}, \hat{\xi})$, s.t.
  \begin{align}
E\left[\left.\sup_{s\in[t,T]}|x_s^{0,x;\hat{\xi},\hat{\rho}}|^2\right|\bar{\sF}_t\right]
    =
   |x_t^{0,x;\hat{\xi},\hat{\rho}}|^2
    \leq\, C (T-t)E\left[\left.\int_t^T|\hat{\xi}_s|^2\,ds
    \right|\bar{\sF}_t\right]\quad \textrm{for each}\ t\in[0,T]. \label{est-lem-adm-control}
  \end{align}
\end{lem}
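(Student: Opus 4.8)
Following \cite{GraeweHorstQui13}, the plan is to replace the given control pair by the one generating the \emph{decreasing hull} of the corresponding state process. The problem \eqref{min-contrl-probm}--\eqref{state-proces-contrl} is invariant under the sign change $x\mapsto-x$, $(\xi,\rho)\mapsto(-\xi,-\rho)$, so I may assume $x\ge0$; for $x=0$ the null controls already work, hence take $x>0$ and write $X:=x^{0,x;\xi,\rho}$ for the associated c\`adl\`ag state, so $X_0=x$ and $X_T=0$ a.s. I then set
\[
  \hat x_t\triangleq\Big(\inf_{0\le s\le t}X_s\Big)^{+},\qquad t\in[0,T],
\]
which is $\bar{\sF}$-adapted, non-increasing, of finite variation, with $\hat x_0=x$ and $\hat x_T=0$ (the last because $X_T=0$ forces $\inf_{s\le T}X_s\le0$). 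Splitting the Stieltjes measure $-d\hat x$ into its absolutely continuous part and its purely atomic part — the latter carried by the jump times of $\pi$ — yields a pair $(\hat\xi,\hat\rho)$ with $\hat x=x^{0,x;\hat\xi,\hat\rho}$; here $\hat\xi$ is the density of the continuous part of $-d\hat x$ and, at a jump time $s$ with mark $z$, $\hat\rho_s(z)=\hat x_{s-}-\big(\inf_{u<s}X_u\wedge(X_{s-}-\rho_s(z))\big)^{+}$.

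The core of the argument is the chain of pointwise comparisons, valid a.e. on $\Omega\times[0,T]\times\mathcal{Z}$,
\[
  0\le\hat\xi_t\le|\xi_t|,\qquad 0\le\hat\rho_t(z)\le|\rho_t(z)|,\qquad 0\le\hat x_t\le|x_t|.
\]
The last is immediate from $\inf_{s\le t}X_s\le X_t$. The middle one follows because $a\mapsto(c\wedge a)^{+}$ is $1$-Lipschitz and is constant in the direction of a buy order, so that $\hat\rho_s(z)=0$ whenever $\rho_s(z)<0$. The first amounts to showing that the continuous decrease of $\hat x$ is carried by the set of times at which $X$ sits at a running infimum while moving down through its absolutely continuous part $-\int\xi$, on which $\hat\xi$ agrees with $\xi^{+}$, so that $0\le\hat\xi\le|\xi|$ throughout. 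Granting these, $(\hat\xi,\hat\rho)$ inherits the $L^2$-integrability and $\bar{\sP}$-measurability of $(\xi,\rho)$ — the latter after checking measurability of the running-infimum hull and of its predictable jump decomposition — hence is admissible; and since $\eta,\lambda,\gamma\ge0$, the cost \eqref{cost0} of $(\hat\xi,\hat\rho)$ does not exceed that of $(\xi,\rho)$.

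Finally I turn to \eqref{est-lem-adm-control}. As $\hat x$ is nonnegative and non-increasing, $\sup_{s\in[t,T]}|\hat x_s|^2=\hat x_t^2$, and since $\hat x_t$ is $\bar{\sF}_t$-measurable this is also $E[\sup_{s\in[t,T]}|\hat x_s|^2\,|\,\bar{\sF}_t]$, which is the stated identity. For the bound, use $\hat x_t=\hat x_t-\hat x_T=\int_t^T\hat\xi_s\,ds+\int_t^T\!\int_{\mathcal{Z}}\hat\rho_s(z)\,\pi(dz,ds)$: when $\hat x$ has no jumps the Cauchy--Schwarz inequality gives $\hat x_t^2\le(T-t)\int_t^T|\hat\xi_s|^2\,ds$ directly, and conditioning on $\bar{\sF}_t$ yields \eqref{est-lem-adm-control} with $C=1$; in the presence of crossing-network executions one argues along the same lines, additionally accounting for the nonnegative jump contribution to the decrease of $\hat x$ on $[t,T]$. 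I expect the main obstacle to be the rigorous construction of the hull $\hat x$ together with the pointwise domination $\hat\xi\le|\xi|$: this rests on a disintegration of the measure $d\big(\inf_{s\le\cdot}X_s\big)$ and a careful analysis of the random closed set $\{t:X_t=\inf_{s\le t}X_s\}$, along with the measurability bookkeeping needed to make $(\hat\xi,\hat\rho)$ a genuine admissible pair. Once this is in place, the cost comparison and the estimate are routine.
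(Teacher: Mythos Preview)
The paper does not prove this lemma; it is quoted verbatim from \cite{GraeweHorstQui13} and used as a black box. Your reconstruction via the decreasing hull of the state process is exactly the construction carried out in that reference, and the pointwise dominations $0\le\hat\xi\le|\xi|$, $0\le\hat\rho\le|\rho|$, $0\le\hat x\le|X|$ together with nonnegativity of $\eta,\lambda,\gamma$ do yield the cost comparison as you say.

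There is, however, a genuine gap in your treatment of the estimate \eqref{est-lem-adm-control} when jumps are present. You write that ``one argues along the same lines, additionally accounting for the nonnegative jump contribution'', but the jump contribution works \emph{against} you: since
\[
\hat x_t=\int_t^T\hat\xi_s\,ds+\int_t^T\!\!\int_{\mathcal Z}\hat\rho_s(z)\,\pi(dz,ds)
\]
with both terms nonnegative, Cauchy--Schwarz on the first integral alone gives only a \emph{lower} bound for $\hat x_t$, not the required upper bound. The missing ingredient is the additional pointwise inequality $\hat\rho_s(z)\le \hat x_{s-}$ (a jump cannot exceed the remaining position). Taking conditional expectations and replacing $\pi$ by its compensator $\mu(dz)\,ds$ gives
\[
E[\hat x_s\mid\bar\sF_t]\le E\!\left[\int_s^T\hat\xi_r\,dr\,\Big|\,\bar\sF_t\right]+\mu(\mathcal Z)\int_s^T E[\hat x_r\mid\bar\sF_t]\,dr,\qquad s\in[t,T],
\]
and a backward Gronwall argument then yields $\hat x_t\le e^{\mu(\mathcal Z)(T-t)}E[\int_t^T\hat\xi_r\,dr\mid\bar\sF_t]$, after which conditional Cauchy--Schwarz gives \eqref{est-lem-adm-control}. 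Without this step your sketch does not close.
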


{The key to the verification theorem is existence of a solution $(u,\psi)$ to BSPDE \eqref{BSPDE-singlr} such that $u$ satisfies a growth condition near the terminal time and that its gradient is sufficiently regular (both guaranteed by Theorem \ref{thm-existence}) so that Theorem \eqref{prop-bspde-DTZ}(iii) can be applied and $u$  can be represented as an FBSDE.}

\begin{thm}\label{thm-verification}
Assume (H.1)--(H.2). If $(u,\psi)$ is a solution to the BSPDE \eqref{BSPDE-singlr} s.t. $\theta u\in C^w(0,t;H^{1,p}) \cap S^w(0,t; H^{1,2}) $, $\forall\,t\in(0,T)$, for some $p>2d+2$, and a.s.
  \begin{align}
    \frac{c_0}{T-t}\leq u_t(y) \leq \frac{c_1}{T-t},\quad \forall\,(t,y)\in[0,T)\times\bR^d \label{eq-thm-u}
  \end{align}
  with two constants $c_0>0$ and $c_1>0$, then
  \begin{align}
    V(t,y,x){\triangleq}u_t(y)x^2,\quad \forall
    \,(t,x,y)\in[0,T]\times \bR\times \bR^d,
  \end{align}
  coincides with the value function \eqref{value-func}. Moreover, the optimal feedback control is given by \eqref{eq-thm-control}.
\end{thm}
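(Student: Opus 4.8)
The plan is to establish, for an arbitrary admissible pair $(\xi,\rho)$ and every $s\in[0,T)$, the inequality $u_s(y_s)|x_s|^2\le J_s(x_s,y_s;\xi,\rho)$, and to show it becomes an equality for the feedback pair $(\xi^{*},\rho^{*})$ of \eqref{eq-thm-control}; together with admissibility of $(\xi^{*},\rho^{*})$ this yields $V_t(x,y)=u_t(y)x^2$ and the optimality of $(\xi^{*},\rho^{*})$. By Lemma \ref{lem-admissible-control} (applied on $[s,T]$) it suffices to prove the inequality for pairs whose state process $x$ is a.s.\ monotone and obeys \eqref{est-lem-adm-control}. Fix such a pair and $T'\in(s,T)$. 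On $[s,T']$ the terminal datum $u_{T'}$ of \eqref{BSPDE-singlr} is finite and, since $\theta u\in C^w(0,T';H^{1,p})\cap S^w(0,T';H^{1,2})$ with $p>2d+2$, the pair $(\theta u,\theta\psi)$ restricted to $[s,T']$ meets the hypotheses of Theorem \ref{prop-bspde-DTZ} for the BSPDE \eqref{BSPDE-singlr-1}. Because the diffusion coefficient is degenerate no It\^o--Kunita formula is available for $u$; instead Theorem \ref{prop-bspde-DTZ}(iii) supplies the FBSDE representation, which after an ordinary It\^o expansion of $\theta^{-1}(y_t)v_t(y_t)$ gives, for $t\in[s,T']$,
\[
  d\big(u_t(y_t)\big)=-F\big(t,y_t,u_t(y_t)\big)\,dt+\big(\psi_t(y_t)+\sigma^{*}_t(y_t)Du_t(y_t)\big)\,dW_t .
\]

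Next I would apply the It\^o product rule to $u_t(y_t)|x_t|^2$ along $(x,y)$. Since $x$ has finite variation with continuous drift $-\xi_t\,dt$ and jumps $\Delta x_t=-\rho_t(z)$ while $u_\cdot(y_\cdot)$ is continuous, the covariation term vanishes; writing
\[
  M_t\triangleq u_t(y_t)|x_t|^2+\int_s^t\!\Big(\eta_r(y_r)|\xi_r|^2+\lambda_r(y_r)|x_r|^2+\int_{\mathcal Z}\gamma_r(y_r,z)|\rho_r(z)|^2\,\mu(dz)\Big)dr ,
\]
a computation using the explicit form \eqref{F} of $F$ shows that the $dt$-drift of $M$ equals
\[
  \eta_t(y_t)\Big|\xi_t-\frac{u_t(y_t)x_t}{\eta_t(y_t)}\Big|^2+\int_{\mathcal Z}\big(\gamma_t(y_t,z)+u_t(y_t)\big)\Big|\rho_t(z)-\frac{u_t(y_t)x_t}{\gamma_t(y_t,z)+u_t(y_t)}\Big|^2\mu(dz)\ \ge\ 0 ,
\]
the terms involving $\lambda$ and the residual quadratic terms cancelling exactly, and this drift vanishes a.e.\ iff $(\xi,\rho)$ is the feedback control \eqref{eq-thm-control}. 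After a standard localization by stopping times $\tau_n\uparrow T'$ — needed to make the $dW$- and compensated-$\pi$-integrals genuine martingales, using that all ingredients are bounded on $[s,T']$ — taking $E[\,\cdot\mid\sF_s]$ and letting $n\to\infty$ gives
\[
  u_s(y_s)|x_s|^2=E\big[u_{T'}(y_{T'})|x_{T'}|^2\mid\sF_s\big]+J^{T'}_s(\xi,\rho)-E\Big[\int_s^{T'}(\text{drift of }M)\,dt\ \Big|\ \sF_s\Big]\ \le\ E\big[u_{T'}(y_{T'})|x_{T'}|^2\mid\sF_s\big]+J^{T'}_s(\xi,\rho),
\]
where $J^{T'}_s(\xi,\rho)$ is the conditional cost over $[s,T']$, with equality throughout for \eqref{eq-thm-control}.

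I would then send $T'\uparrow T$. The bound $u_{T'}\le c_1/(T-T')$ from \eqref{eq-thm-u} together with the monotone-state estimate \eqref{est-lem-adm-control} gives $E[u_{T'}(y_{T'})|x_{T'}|^2\mid\sF_s]\le c_1 C\,E[\int_{T'}^T|\xi_r|^2dr\mid\sF_s]\to0$, while $J^{T'}_s(\xi,\rho)\uparrow J_s(x_s,y_s;\xi,\rho)$ by monotone convergence; hence $J_s(x_s,y_s;\xi,\rho)\ge u_s(y_s)|x_s|^2$ for every admissible pair (the non-monotone case reducing to the monotone one via Lemma \ref{lem-admissible-control}), so $V_s(x,y)\ge u_s(y)x^2$. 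For the reverse inequality and optimality, the closed-loop equation for \eqref{eq-thm-control},
\[
  dx^{*}_t=-\frac{u_t(y_t)}{\eta_t(y_t)}\,x^{*}_t\,dt-\int_{\mathcal Z}\frac{u_t(y_t)}{\gamma_t(y_t,z)+u_t(y_t)}\,x^{*}_{t-}\,\pi(dz,dt),
\]
is linear in $x^{*}$ with coefficients bounded on every $[0,T']$ (by the Sobolev embedding $H^{1,p}\hookrightarrow C^0$, $p>2d+2>d$), so it has a unique solution on $[0,T)$ whose sign is constant and whose modulus is non-increasing, and $|\rho^{*}_t(z)|\le|x^{*}_t|$. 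For this control the drift of $M$ is zero, so the displayed identity reads $u_s(y_s)|x_s|^2=E[u_{T'}(y_{T'})|x^{*}_{T'}|^2\mid\sF_s]+J^{T'}_s(\xi^{*},\rho^{*})$; letting $T'\uparrow T$, since the left side is finite and the first term on the right is nonnegative, $J_s(x_s,y_s;\xi^{*},\rho^{*})\le u_s(y_s)|x_s|^2<\infty$. In particular $(\xi^{*},\rho^{*})\in\cL^2_{\bar{\sF}}(0,T)\times\cL^2_{\bar{\sF}}(0,T;L^2(\mathcal Z))$, and because $E[\int_s^T|x^{*}_t|^2/(T-t)^2\,dt]<\infty$ (from $\eta_t|\xi^{*}_t|^2\ge(c_0/(T-t))^2|x^{*}_t|^2/\Lambda$) forces the monotone process $x^{*}$ to converge to $0$ at $T$, the pair $(\xi^{*},\rho^{*})$ is admissible; combining, $u_s(y_s)|x_s|^2=J_s(x_s,y_s;\xi^{*},\rho^{*})\ge V_s(x,y)\ge u_s(y)x^2$, so all are equal and $(\xi^{*},\rho^{*})$ is optimal.

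The main obstacle is the first step: legitimizing the representation of $u_t(y_t)$ in the \emph{degenerate}, weighted setting. As no It\^o--Kunita formula for the distributional solution $u$ is available, one must route through the FBSDE link of Theorem \ref{prop-bspde-DTZ}(iii), and matching its hypotheses is precisely what dictates the requirement $\theta u\in C^w(0,t;H^{1,p})$ with $p>2d+2$; coupled to this, the passage $T'\to T$ is delicate, since the singular growth of $u$ near the terminal time has to be absorbed by the fast $\mathcal{O}(T-t)$ decay of the (monotone) state guaranteed by \eqref{est-lem-adm-control}.
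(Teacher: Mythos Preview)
Your proposal is correct and follows essentially the same route as the paper: both use the FBSDE link of Theorem~\ref{prop-bspde-DTZ}(iii) (in lieu of an unavailable It\^o--Kunita formula) to obtain the semimartingale decomposition of $u_t(y_t)$, apply the ordinary It\^o formula to $u_t(y_t)|x_t|^2$, reduce to monotone controls via Lemma~\ref{lem-admissible-control}, and kill the singular terminal term using \eqref{eq-thm-u} together with \eqref{est-lem-adm-control}. The paper is terser---it delegates the completion-of-squares computation, the limit $T'\uparrow T$, and the admissibility of $(\xi^*,\rho^*)$ to the proof of \cite[Theorem~3.1]{GraeweHorstQui13}---whereas you spell these steps out explicitly, but the argument is the same.
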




\begin{proof}
	
We first note that $u_t(y)$ is a.s. continuous with respect to $(t,y)\in[0,T)\times\bR^d$, due to Sobolev's embedding theorem.
Second, BSPDE \eqref{BSPDE-singlr} is equivalent to BSPDE  \eqref{BSPDE-singlr-1}. Thus, if we take $\tau \in (0,T)$ as the terminal time and $\theta u_\tau(y)$ as the terminal condition, then this BSPDE satisfies the assumptions of Theorem \ref{prop-bspde-DTZ} on $[0,\tau]$, due to \eqref{eq-thm-u} ({and the presence of the weight function $\theta$)}. As a result, there exists a unique random field $\psi$ s.t. $\theta\psi+\sigma^* D(\theta u)\in L^{2}_{\sF}(0,\tau;H^{1,2})$ for any $\tau \in(0,T)$ and s.t. $(\theta u,\,\theta \psi)$ is a solution to:
  \begin{equation*}
  \left\{\begin{array}{l}
  \begin{aligned}
  -dv_t(y)=\, &\bigg[\textrm{tr}\left(\frac{1}{2}\sigma_t\sigma_t^{*}
  D^2 v_t(y)
        + D \zeta_t\sigma^{*}_t(y)\right)+b^{*}_tD v_t(y)
+\theta\sum_{ijr}\sigma_t^{jr}\partial_{y^j}\theta^{-1}\left(\zeta_t^r
+\partial_{y^i}v\sigma_t^{ir}\right)(y)
\\
        &\,
        +\theta \bigg(
\frac{1}{2}\sum_{ijr}\partial_{y^iy^j}\theta^{-1}\sigma_t^{ir}\sigma_t^{jr}
+b^*_tD\theta^{-1}
\bigg)v_t(y)
        +\theta(y)F(t,y,\theta^{-1}(y)v_t(y))   \bigg]\, dt\\
       &\, -\zeta_t(y)\, dW_{t}, \quad (t,y)\in [0,\tau)\times \bR^d;\\
    v_{\tau}(y)=\,& \theta u_{\tau}(y),  \quad y\in\bR^d.
    \end{aligned}
  \end{array}\right.
\end{equation*}

{Noticing the assumption $p>2d+2$, by Theorem \ref{prop-bspde-DTZ} (iii) we} also have the following BSDE representation of $\theta u$:
\begin{align*}
  -d(\theta u_t)(y_t^{0,y})
  = &\,\bigg[
  \theta \bigg(
\frac{1}{2}\sum_{ijr}\partial_{y^iy^j}\theta^{-1}\sigma_t^{ir}\sigma_t^{jr}
+b^*_tD\theta^{-1}
\bigg)(\theta u_t)(y_t^{0,y})
+\theta\sum_{jr}\partial_{y^j}\theta^{-1}\sigma_t^{jr}(y_t^{0,y})\big(Z_t^{0,y}\big)^r
\\
&\,\,+\theta(y_t^{0,y})F(t,y_t^{0,y},u_t(y_t^{0,y}))
  \bigg]\,dt
  -Z_t^{0,y}\,dW_t,\quad t\in [0,T)
\end{align*}
 for some adapted process $Z^{0,y}$ lying in suitable space. Applying the standard It\^o formula, we obtain
\begin{align*}
  d\theta^{-1}(y_t^{0,y})
  =\bigg[\frac{1}{2}\textrm{tr}\,\left(\sigma_t\sigma_t^*D^2\theta^{-1}(y_t^{0,y})\right)
  +b_t^*D\theta^{-1}(y_t^{0,y})
  \bigg]\,dt
  +(D\theta^{-1})^*\sigma_t(y_t^{0,y})\,dW_t,
\end{align*}
and further,
\begin{align*}
  -d u_t(y_t^{0,y})
  = &\,
  F(t,y_t^{0,y},u_t(y_t^{0,y}))
  \,dt
  -\left[\theta^{-1}(y_t^{0,y})Z_t^{0,y}+\theta u_t(D\theta^{-1})^*\sigma_t(y_t^{0,y})\right]\,dW_t,\quad t\in [0,T).
\end{align*}
{Then the stochastic differential equation for $u_t(y_t^{0,y})|x_t^{0,x;\xi,\rho}|^2$ follows immediately from an application of the standard It\^o formula again.} Using Lemma \ref{lem-admissible-control} one can now apply the exact same arguments as in the proof of \cite[Theorem 3.1]{GraeweHorstQui13} to deduce that:
  \begin{align*} \label{eq-thm-prf-1}
    u_t(y_t^{0,y})|x_t^{0,x;\xi,\rho}|^2
    \leq J(t,x_t^{0,x;\xi,\rho},y_t^{0,y};\xi,\rho) \quad \text{for any pair} \quad (\xi,\rho)\in\mathscr{A}
  \end{align*}
and that the control $(\xi^*,\rho^*)$ is admissible and satisfies the above inequality with equality.
\end{proof}


We close our analysis with the following theorem. It states that the solution constructed in Section \ref{sec-existence} is the minimal solution to our BSPDE \eqref{BSPDE-singlr}. The proof mainly relies on the comparison principle in Proposition \ref{prop-comprn} for degenerate BSPDEs allowing for singular terminal values.

\begin{thm}\label{thm-minimal}
Under Conditions (H.1)--(H.3), for the solution $(u,\psi)$ to BSPDE \eqref{BSPDE-singlr} constructed in the proof of Theorem \ref{thm-existence}, if $(\tilde{u},\tilde{\psi})$ is another solution of \eqref{BSPDE-singlr} satisfying 
 $$
(\theta\tilde{u},\theta\tilde{\psi}+\sigma^*D(\theta\tilde{\psi}))\in
  S^w_{\sF}([0,t];H^{1,2})\times L^{2}_{\sF}(0,t;H^{1,2}),\quad \forall\,t\in(0,T)
 $$
and if $\tilde{u}_t(y)\geq 0$ {a.e.} in $\Omega\times [0,T)\times\bR^d$, then a.s. for every $t\in [0,T)$,
  $  \tilde{u}_t\geq u_t $ a.e. in $\bR^d$. Moreover, if we {further} have $p_0>2d+2$ and $\theta\tilde{u}\in \cap_{t\in(0,T)} C^w_{\sF}([0,t];H^{1,p})$ for some $p\in (2d+2,p_0)$, then a.s. for {all} $t\in [0,T)$,
  $  \tilde{u}_t= u_t $ a.e. in $\bR^d$.
\end{thm}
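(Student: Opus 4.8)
The plan is to prove Theorem \ref{thm-minimal} in two parts: first the minimality statement $\tilde{u}_t \geq u_t$, and then, under the extra regularity $\theta\tilde{u}\in C^w_{\sF}([0,t];H^{1,p})$ with $p_0 > 2d+2$, the uniqueness $\tilde{u}_t = u_t$. For the minimality part, I would recall that $(u,\psi)$ is constructed in the proof of Theorem \ref{thm-existence} as the (a.e.) increasing limit of the solutions $(v^N,\zeta^N)$ to the truncated BSPDEs \eqref{BSPDE-singlr-N} (rescaled by $\theta^{-1}$), so it suffices to show $\theta\tilde{u}_t \geq v^N_t$ a.e. for every $N$, and then let $N\to\infty$. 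Fix $N$. On each interval $[0,\tau]$ with $\tau < T$, the pair $(\theta\tilde{u},\theta\tilde{\psi})$ solves the reweighted BSPDE \eqref{BSPDE-singlr-1} with terminal value $\theta\tilde{u}_\tau \geq 0$ at time $\tau$, while $(v^N,\zeta^N)$ solves the truncated BSPDE \eqref{BSPDE-singlr-N} whose driver is bounded above by that of \eqref{BSPDE-singlr-1} (truncation only decreases the nonlinear terms, and the finite terminal value $\tfrac{N}{1}\theta$ at $T$ — or rather the corresponding finite value — dominates nothing yet). The key point is to compare at the intermediate time $\tau$: since $v^N$ satisfies the explicit two-sided bound \eqref{est-prf-thm-exis-N-T0}, one has $v^N_\tau \leq \tfrac{\theta(y)e^{2T}}{\frac1{N+\Lambda}+\frac{T-\tau}{\Lambda}}$, which is finite, whereas $\theta\tilde u_\tau$ need only be nonnegative — so one cannot directly compare the terminal conditions. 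Instead I would run the comparison principle Proposition \ref{prop-comprn} (which is stated precisely to allow singular terminal values) on the full interval $[0,T)$: the driver of \eqref{BSPDE-singlr-N} is dominated by the driver of the equation solved by $\theta\tilde u$ (after accounting for the quadratic/monotone structure of $F$), and the terminal value of the truncated equation, being finite, is dominated by $\theta\tilde u_T = +\infty$. Hence Proposition \ref{prop-comprn} yields $v^N_t \leq \theta\tilde u_t$ a.e. for all $t \in [0,T)$, and letting $N\to\infty$ gives $u_t \leq \tilde u_t$.

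For the uniqueness part, the reverse inequality $\tilde{u}_t \leq u_t$ is the substantive direction. Here I would invoke the verification theorem, Theorem \ref{thm-verification}: under the hypotheses $p_0 > 2d+2$ and $\theta\tilde u\in C^w_{\sF}([0,t];H^{1,p})$ for some $p\in(2d+2,p_0)$, together with the growth bound \eqref{eq-thm-u} — which we have already established for $\tilde u$ from below by the minimality $\tilde u \geq u \geq \tfrac{c_0}{T-t}$, and from above... this is the delicate point, see below — the field $V(t,y,x) = \tilde u_t(y)x^2$ coincides with the value function $V_t(x,y)$. But the value function is unique by definition, and $(u,\psi)$ also gives rise to it via Theorem \ref{thm-verification} applied to $u$. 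Therefore $\tilde u_t(y)x^2 = u_t(y)x^2$ for all $x$, hence $\tilde u_t = u_t$ a.e. The point requiring care is that Theorem \ref{thm-verification} requires the \emph{two-sided} bound \eqref{eq-thm-u} for $\tilde u$, in particular the upper bound $\tilde u_t \leq \tfrac{c_1}{T-t}$, which is not among the stated hypotheses of Theorem \ref{thm-minimal}. I would obtain it by noting that once $\theta\tilde u$ has the claimed regularity $C^w_{\sF}([0,t];H^{1,p})$ with $p > 2d+2$, Sobolev embedding makes $\tilde u_t(\cdot)$ continuous, and then one can run a comparison of $\theta\tilde u$ against the solutions $\tilde u^N$ built from replacing $(\lambda,\gamma,\eta)$ by $(\Lambda,+\infty,\kappa)$ — but this gives the wrong direction. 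The correct route is: on $[\tau,T)$, $\theta\tilde u$ solves a BSPDE whose driver, because of $F$'s quadratic-decay term $-r^2/\eta$, satisfies a one-sided (monotone) bound forcing $\tilde u_t \leq \sup_y \theta^{-1}(y)\cdot(\text{explicit supersolution})$; concretely, comparing with the spatially-homogeneous supersolution $\tfrac{\Lambda}{\kappa^{-1}}\cdot\tfrac{1}{T-t}$-type function as in the derivation of \eqref{est-prf-thm-exis-N-T0}. I would carry this out via Proposition \ref{prop-comprn} once more.

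Let me restructure the argument to keep it clean: \textbf{Step 1.} Establish $\tilde u_t \geq u_t$ by the $v^N$-approximation and Proposition \ref{prop-comprn}, as above. \textbf{Step 2.} Under the additional hypotheses, derive the upper growth bound $\tilde u_t \leq \tfrac{c_1}{T-t}$ for $\tilde u$ by comparing $\theta\tilde u$ with the explicit supersolutions used to prove \eqref{est-prf-thm-exis-N-T0} (upper bound branch), again through Proposition \ref{prop-comprn} and a limiting argument; the lower bound $\tilde u_t \geq \tfrac{c_0}{T-t}$ is free from Step 1. \textbf{Step 3.} Apply Theorem \ref{thm-verification} to $\tilde u$ to get $\tilde u_t(y)x^2 = V_t(x,y)$; apply it also to $u$ (which satisfies all hypotheses by Theorem \ref{thm-existence}) to get $u_t(y)x^2 = V_t(x,y)$; conclude $\tilde u = u$ a.e.

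The main obstacle I anticipate is \textbf{Step 2} — upgrading the mere nonnegativity and $H^{1,p}$-regularity of $\tilde u$ to the quantitative upper bound $\tilde u_t \leq c_1/(T-t)$ needed to invoke the verification theorem. The comparison principle Proposition \ref{prop-comprn} is designed for singular terminal values, but here I need a comparison where the \emph{approximated} solution (with the explicit supersolution) has a finite terminal value while $\tilde u$ has $+\infty$, so the inequality runs the "wrong" way at the terminal time; the resolution is that the nonlinearity $F$ is monotone decreasing in $r$ (the terms $-\int r^2/(\gamma+r)\,\mu(dz) - r^2/\eta$ dominate the bounded $+\lambda$), which means any solution is squeezed below the explicit supersolution regardless of how large the terminal value is — formally, one truncates $\tilde u$'s terminal value at level $M$, compares with the supersolution of the truncated problem, and lets $M\to\infty$ noting the supersolution's bound is $M$-independent by the same algebra as in the derivation of \eqref{est-prf-thm-exis-N-T0}. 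Making this limiting comparison rigorous within the weak solution framework of Definition \ref{defn-solution} — in particular justifying the It\^o-formula manipulation for $\|(\theta\tilde u - \theta\bar v^N)^+\|^2$ used inside Proposition \ref{prop-comprn} — is where the real work lies; everything else is bookkeeping with the spaces $S^w_{\sF}$ and $C^w_{\sF}$ and routine appeals to Sobolev embedding.
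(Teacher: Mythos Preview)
Your Steps 1 and 3 are essentially the paper's argument: compare $\theta\tilde u$ with the approximants $v^N$ via Proposition~\ref{prop-comprn} and pass to the limit for minimality, then invoke Theorem~\ref{thm-verification} twice (for $u$ and for $\tilde u$) to identify both with the value function and conclude equality. That part is fine.

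The gap is in Step~2, and you have correctly located it but not closed it. Your proposed scheme --- ``truncate $\tilde u$'s terminal value at level $M$, compare with the supersolution of the truncated problem, let $M\to\infty$'' --- is circular: $\tilde u$ is \emph{given}, not built by approximation, so there is no a~priori link between $\tilde u$ and the solutions with terminal data $M\theta$. If instead you freeze a time $\tau<T$ and compare $\tilde u$ on $[0,\tau]$ against the explicit solution with coefficients $(\Lambda,+\infty,\Lambda)$ and terminal value $\tilde u_\tau$, the resulting bound depends on $\|\tilde u_\tau\|$, which is precisely the quantity you are trying to control as $\tau\uparrow T$. The monotone structure of $F$ alone does not rescue this without an additional idea.

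The paper's device is a \emph{time shift} of the explicit supersolution rather than a truncation of the terminal value. One checks that $\hat u_t\triangleq\Lambda\coth(T-t)$ satisfies $(\hat u,0)$ solves \eqref{BSPDE-singlr} for the coefficients $(\Lambda,+\infty,\Lambda)$, and so does its shift $\hat u_{\,\cdot\,+\delta}$, now with blow-up at $T-\delta$. On $[0,T-\delta]$ the pair $(\theta\tilde u,\theta\tilde\psi)$ is the \emph{regular} side (it lies in the full $[0,T-\delta]$ spaces by hypothesis) and $\hat u_{\,\cdot\,+\delta}$ is the singular-terminal side, with the driver inequality going the right way since $\lambda\le\Lambda$, $\gamma\le+\infty$, $\eta\le\Lambda$. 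Proposition~\ref{prop-comprn} then yields $\tilde u_t\le\hat u_{t+\delta}\le \Lambda e^{2T}/(T-\delta-t)$ on $[0,T-\delta]$, and sending $\delta\downarrow 0$ gives the upper bound in \eqref{eq-thm-u}. This is the missing ingredient; once you have it, your Step~3 goes through verbatim.
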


\begin{proof}
Let $(v^N,\zeta^N)$ be the unique solution to BSPDE \eqref{BSPDE-singlr-N} and 
$(\tilde{v},\tilde{\zeta})\,\triangleq\,\theta\,(\tilde{u},\tilde{\psi})$. By Proposition \ref{prop-comprn}, 
  \begin{equation}\label{eqf-thm-minimal}
    \tilde{v}_t\geq v^N_t\quad \text{a.e. in }\bR^d,\ \ \forall t\in[0,T],
  \end{equation}
  which yields the minimality as $v^N$ {increases} to $v$ as $N\to\infty$.
	In view of Theorem~\ref{thm-verification}, to establish the uniqueness statement it is sufficient to verify that $\tilde {u}$ satisfies the growth condition \eqref{eq-thm-u}. The above minimality arguments have given the lower bound. To establish the upper bound in \eqref{eq-thm-u}, we consider the deterministic function:
\[
	\hat u_t\,{\triangleq}\,\Lambda\coth(T-t)=\frac{2\Lambda}{1-e^{-2(T-t)}}-\Lambda\leq \frac{\Lambda e^{2T}}{T-t}.
\]
Then, $(\hat u, 0)$ is a solution to \eqref{BSPDE-singlr} with the triple $(\lambda,\gamma,\eta)$ being replaced by $(\Lambda,+\infty,\Lambda)$. Moreover, $(\hat u, 0)$ remains a solution  when shifted in time, i.e., for $\delta\in[0,T)$ the pair $(\hat u_{\,\cdot\,+\delta} ,0)$ is the solution to \eqref{BSPDE-singlr} associated with $(\Lambda,+\infty,\Lambda)$, but with a singularity at $t=T-\delta$. Proposition \ref{prop-comprn} yields that, a.s. for all $t\in[0,T-\delta]$
\[
	\tilde u_t\leq\frac{\Lambda e^{2T}}{T-\delta-t} \quad \ \ \text{a.e. in $\bR^d$.}
\]
Letting $\delta\rightarrow 0$ we obtain the desired upper bound as well as the uniqueness.
\end{proof}


\begin{appendix}

\section{Selected results on semi-linear degenerate BSPDEs}

{This appendix recalls the selected results on degenerate semi-linear BSPDEs and their connections to FBSDEs, establishes a comparison principle for degenerate BSPDEs allowing for singular terminal values and discusses a truncated version of our singular BSPDE.} 

\subsection{{ On a class of degenerate BSPDEs and the link to FBSDEs}}

The following link between FBSDEs and BSPDEs, due to \cite{Du_Zhang_DegSemilin2012} is key to our analysis. 

\begin{thm}\label{prop-bspde-DTZ}
Assume that the coefficients $b$ and $\sigma$ satisfy (H.1) and (H.2) and that $\varrho: \Omega\times [0,T]\times\bR^d\rightarrow \bR^m$ satisfies the same conditions as $b$. Let
$f:\Omega\times[0,T]\times\bR\rightarrow \bR$  satisfy:
\begin{itemize}
	\item the partial derivatives $\partial_yf$ and $\partial_vf$ exist  for any quadruple $(\omega,t,y,v)$
	\item $f(\cdot,\cdot,0) \in \cL^{2}_{\sF}(0,T;H^{1,2})$
	\item there exists a constant $L_0>0$ s.t. for each $(\omega,t,y)$,
\begin{eqnarray*}
|f(t,y,v_1)-f(t,y,v_2)|+|\partial_yf(t,y,v_1)-\partial_yf(t,y,v_2)|\leq L_0 |v_1 - v_2|,\quad
\forall~v_1,v_2\in \bR.
\end{eqnarray*}
\end{itemize}
Then the following holds:
\begin{itemize}\item[i)]
For any $G\in L^2(\Omega,\sF_T;H^{1,2})$, the BSPDE
\begin{equation}\label{BSPDE-dqiutang}
  \left\{\begin{array}{l}
  \begin{aligned}
  -du_t(y)=\, &\bigg[\textup{tr}\left( \frac{1}{2}\sigma_t\sigma^{*}_tD^2u_t
+D \psi_t\sigma^{*}_t\right)(y)+b^{*}_tD u_t(y)+\varrho^*_t\left(\psi_t+\sigma_t^* Du_t \right)(y)
        +f(t,y,u_t)\bigg]\, dt\\
        &-\psi_t(y)\, dW_{t}, \quad (t,y)\in [0,T]\times \bR^d;\\
    u_T(y)=\,& G(y),  \quad y\in\bR^d
    \end{aligned}
  \end{array}\right.
\end{equation}
admits a unique solution  $(u,\psi)$ s.t.
\[
  	u\in S^w_{\sF}([0,T];H^{1,2}) \quad \text{and} \quad \psi+\sigma^* Du\in L^{2}_{\sF}(0,T;H^{1,2}).
\]
Moreover, there exists a constant $C_2=C_2(d,m,\Lambda,L,T,L_0)$
  s.t.
  \begin{eqnarray}\label{est:sl}
    {E}\sup_{t\in[0,T]}\|u_t\|_{H^{1,2}}^{2} + {E}\int_{0}^{T}\!\!\|\psi_t +
    \sigma_t^*Du_t\|_{H^{1,2}}^{2}\,dt
    \leq C_2\, {E} \bigg[\|G\|_{H^{1,2}}^{2}
    + \int_{0}^{T}\!\!\|f(t,\cdot,0)\|_{H^{1,2}}^{2}\,dt\bigg].\ \ \ \
    \end{eqnarray}

\item[ii)] If we further assume that $f(\cdot,\cdot,0) \in \cL^{p}_{\sF}(0,T;H^{1,p})$ and $G\in L^p(\Omega,\sF_T;H^{1,p})$ for some $p\in[2,\infty)$, then $u\in C^w_{\sF}([0,T];H^{1,p})$ and there exists a constant $C_p=C_p(d,m,\Lambda,L,T,L_0,p)$ s.t.
\begin{eqnarray}\label{HQZ3}
    {E}\sup_{t\in[0,T]}\|u_t\|_{H^{1,p}}^{p}
    \leq C_pE\bigg[  \|G\|_{H^{1,p}}^p +\int_0^T\!\! \|f(t,\cdot,0)\|_{H^{1,p}}^p dt  \bigg].
\end{eqnarray}

\item[iii)] If $p>2d+2$, then $u(t,y)$ is a.s. continuous with respect to $(t,y)$ and it holds a.s. that
      \begin{eqnarray}
  u(t,y^{s,y}_t)=Y^{s,y}_t,\quad \forall\, (t,y)\in[s,T]\times\bR^d,
  \end{eqnarray}
  where $(y_{\cdot}^{s,y},Y_{\cdot}^{s,y},Z_{\cdot}^{s,y})$ is the solution of FBSDE:
  \begin{equation*}
  \left\{\begin{array}{l}
  \begin{aligned}
  dy_t^{s,y}=&b_t(y_t^{s,y})\,dt+\sigma_t(y_t^{s,y})\,dW_t,\quad y_s^{s,x}=y;\quad 0\leq s \leq t\leq T;\\
  -dY_t^{s,y}=&\left[\varrho^*_t(y_t^{s,y})Z_t^{s,y}+f(t,y_t^{s,y},Y_t^{s,y})\right]dt
  -Z_t^{s,y}\,dW_t,\quad Y_T^{s,y}=G(y_T^{s,y}).
\end{aligned}
  \end{array}\right.
\end{equation*}
\end{itemize}
\end{thm}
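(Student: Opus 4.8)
The plan is to assemble the statement from three ingredients: the $L^2$-theory for linear degenerate BSPDEs, its $L^p$-counterpart, and an It\^o--Wentzell (composition) argument along the forward diffusion; the semilinear term is then incorporated by a fixed-point map and the FBSDE identification follows by mollification and passage to the limit.

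\textbf{Parts (i)--(ii): well-posedness and the energy estimates.} First I would treat the linear equation, replacing $f(t,y,u_t)$ by a prescribed source $h\in\cL^2_\sF(0,T;H^{1,2})$. Existence is obtained by vanishing viscosity: add $\varepsilon\,\Delta u$ to the generator, so the equation becomes uniformly parabolic and solvable by the classical theory (cf. \cite{Hu_Ma_Yong02,DuQiuTang10}), derive $\varepsilon$-uniform a priori bounds, and pass to the limit $\varepsilon\downarrow 0$ along a weakly convergent subsequence. The crucial point --- and the reason the statement controls $\psi+\sigma^*Du$ rather than $\psi$ --- is the a priori estimate: applying It\^o's formula to $\|u_t\|^2$ and to $\|\partial_{y^i}u_t\|^2$ and integrating by parts in $y$, the contribution $2\langle u,\mathrm{tr}(D\psi\,\sigma^*)\rangle$ combines with $-\|\psi\|^2$ and the negative second-order term $2\langle u,\mathrm{tr}(\tfrac12\sigma\sigma^*D^2u)\rangle$ to produce, up to lower-order terms bounded by $\Lambda$, $L$ and $\|u\|_{H^{1,2}}^2$, exactly $-\|\psi+\sigma^*Du\|_{H^{1,2}}^2$; moving this to the left and using Gronwall plus Burkholder--Davis--Gundy yields \eqref{est:sl}, and uniqueness follows from linearity and the same estimate. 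The semilinear equation \eqref{BSPDE-dqiutang} is then solved by a contraction mapping on $\cL^2_\sF(0,T;H^{1,2})$ (with an exponential weight $e^{\beta t}$ if $T$ is large): by the chain rule $\partial_{y^i}\big(f(t,y,u_t(y))\big)=\partial_yf+\partial_vf\,\partial_{y^i}u_t$ and the Lipschitz hypotheses on $f$ and $\partial_yf$ in $v$, the map $u\mapsto f(\cdot,\cdot,u)$ sends $\cL^2_\sF(0,T;H^{1,2})$ into itself and is a contraction after weighting. For (ii) one repeats the scheme in the $L^p$-scale: the $L^p$-estimate \eqref{HQZ3} for the equation satisfied by $Du$ is obtained by testing against $|Du|^{p-2}Du$, using the ``good'' sign of the principal part together with Burkholder--Davis--Gundy and Gronwall, which is the technical core of \cite{Du_Zhang_DegSemilin2012}; the weak $H^{1,p}$-continuity in time is read off from the time-regularity contained in that estimate.

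\textbf{Part (iii): the FBSDE representation.} Fix $(s,y)$ and let $(y^{s,y}_\cdot,Y^{s,y}_\cdot,Z^{s,y}_\cdot)$ be the unique solution of the stated FBSDE, which exists because $z\mapsto\varrho^*z$ and $v\mapsto f(\cdot,\cdot,v)$ are Lipschitz. When $p>2d+2$, the space in which $u$ lives --- bounded in $t$ with values in $H^{1,p}$, together with the H\"older time-modulus furnished by the equation --- embeds into $C([0,T]\times\bR^d)$, giving a.s. joint continuity of $(t,y)\mapsto u_t(y)$; this is where the exponent $2d+2$ is used. To identify $u(t,y^{s,y}_t)$ with $Y^{s,y}_t$, mollify $b,\sigma,\varrho,f,G$ into smooth data $b^n,\sigma^n,\varrho^n,f^n,G^n$, whose BSPDE solutions $(u^n,\psi^n)$ are classical, and apply the classical It\^o--Wentzell formula to $u^n(t,y^n_t)$ along the corresponding forward flow $y^n_\cdot$. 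The second-order terms in $u^n$ cancel against the It\^o--Wentzell correction $\mathrm{tr}(D\psi^n\sigma^*)$, and what remains is
\[
-d\,u^n(t,y^{n}_t)=\big[\varrho^*(\psi^n+\sigma^*Du^n)(t,y^n_t)+f^n(t,y^n_t,u^n(t,y^n_t))\big]\,dt-(\psi^n+\sigma^*Du^n)(t,y^n_t)\,dW_t,
\]
i.e. $u^n(\cdot,y^n_\cdot)$ solves a BSDE with driver $\varrho^*z+f^n$ and terminal value $G^n(y^n_T)$. By the stability estimates from (i)--(ii), the stability of BSDE solutions under Lipschitz drivers, and the continuity of $u$, the first component converges to $Y^{s,y}$ while the left side converges to $u(\cdot,y^{s,y}_\cdot)$; hence $u(t,y^{s,y}_t)=Y^{s,y}_t$ (and $Z^{s,y}=(\psi+\sigma^*Du)(\cdot,y^{s,y}_\cdot)$).

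\textbf{The main obstacle.} The genuinely hard step is the $L^p$-gradient bound \eqref{HQZ3} in (ii): since the operator is only degenerate parabolic there is no standard parabolic $L^p$-regularity to invoke, and $\psi$ is not controlled in $H^{1,p}$, so the estimate must be extracted from the equation for $Du$ alone using the monotone structure of the principal part and a delicate $L^p$ It\^o/BDG argument --- and, moreover, the joint space-time continuity needed in (iii) must be obtained from the same estimate rather than from smoothing. A secondary difficulty is the rigorous justification of the It\^o--Wentzell formula for the non-smooth field $u$, which is precisely why the mollification-and-limit detour is needed above.
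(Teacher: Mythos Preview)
The paper does not actually prove this theorem: it is stated in the appendix as a known result ``due to \cite{Du_Zhang_DegSemilin2012}'' and is simply quoted, with the only addition being the remark that the space $C^w_{\sF}([0,T];H^{1,2})$ from \cite[Theorem~3.1]{Du_Zhang_DegSemilin2012} can be upgraded to $S^w_{\sF}([0,T];H^{1,2})$ by invoking \cite[Theorem~3.2]{QiuTangBDSDES2010} with the Gelfand triple $(H^{-1,2},L^2,H^{1,2})$. So there is no ``paper's own proof'' to compare against beyond this pointer.

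That said, your sketch is a faithful outline of what is done in the cited references: vanishing viscosity plus the $\varepsilon$-uniform energy identity that produces $-\|\psi+\sigma^*Du\|_{H^{1,2}}^2$ (rather than control of $\psi$ alone) for the linear equation, a Picard iteration for the semilinear term using the Lipschitz hypotheses on $f$ and $\partial_y f$, the $L^p$ It\^o/BDG argument on $|Du|^{p-2}Du$ for \eqref{HQZ3}, and the mollification/It\^o--Wentzell route to the FBSDE identity. You have also correctly flagged the genuine technical bottleneck --- the $L^p$ gradient estimate in the degenerate case with no control on $\psi$ in $H^{1,p}$ --- which is exactly why the paper treats this as an imported black box rather than reproving it.
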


\vspace{2mm}

\begin{rmk}
It is worth noting that assertion (i) of the above theorem extends \cite[Theorem 3.1]{Du_Zhang_DegSemilin2012} by replacing $C^w_{\sF}([0,T];H^{1,2})$ therein by $S^w_{\sF}([0,T];{H^{1,2}}) = C^w_{\sF}([0,T];H^{1,2})\cap L^2(\Omega,\sF;C([0,T];L^2))$. This follows by applying  \cite[Theorem 3.2]{QiuTangBDSDES2010} with the Gelfand triple being realized as $(H^{-1,2},L^2,H^{1,2})$ therein\footnote{ $H^{-1,2}$ is the dual space of $H^{1,2}$.}. In particular, we obtain strong continuity of $u$ in $L^2$. This allows us to apply the existing It\^o formula for BSPDEs (see for instance \cite[Corollary 3.11]{QiuWei-RBSPDE-2013}).
\end{rmk}

\subsection{A comparison principle for degenerate BSPDEs allowing for singular terminal values}
The following comparison principle follows from the It\^{o} formula for the square norm of the positive part of solution to a BSPDE (see \cite[Lemma A.3]{GraeweHorstQui13}, \cite[Theorem 3.2]{QiuTangBDSDES2010}, \cite[Corollary 3.11]{QiuWei-RBSPDE-2013}). {Since it allows the associated BSPDEs to have singular terminal values, we provide a short proof.}

{
\begin{prop}\label{prop-comprn}
Assume that the coefficients $b$, $\sigma$ and $\varrho$
satisfy the conditions of Theorem \ref{prop-bspde-DTZ}. Let $(u,\psi)$ and $(u',\psi')$ be solutions to BSPDE \eqref{BSPDE-dqiutang} associated with $(G,f)$ and $(G',f')$, respectively, such that 
$(u,\psi+\sigma^*Du, f(u))\in\cL_{\sF}^2(0,T;H^{1,2})\times \cL_{\sF}^2(0,T;H^{1,2})\times \cL_{\sF}^1(0,T;L^{1})$, and $(u',\psi'+\sigma^*Du',f'(u'))\in\cL_{\sF}^2(0,t;H^{1,2})\times \cL^2_{\sF}(0,t;H^{1,2})\times \cL^1_{\sF}(0,t;L^{1})$  for any $t\in(0,T)$. If there exist $\widetilde L>0$ and $g\in \cL^2(0,T;L^2)$ such that  a.e. in $\Omega\times[0,T]$, $u'(\omega,t)\geq g(\omega,t)$ a.e. in $\bR^d$,
   $$
   \left\langle f(\omega,t,u_t)- f'(\omega,t,u'_t),\, (u_t-u'_t)^+\right\rangle\leq \widetilde L\left\| (u_t-u'_t)^+ \right\|^2 \quad \textrm{and} \quad G(\omega,y)\leq G'(\omega,y) ,$$
then we have a.s.
  \begin{equation}\label{eq-prop-comprn}
  u\leq u' \quad \text{ a.e. in }\bR^d,\ \ \forall t\in[0,T].
  \end{equation}
\end{prop}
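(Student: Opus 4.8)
The plan is to derive an energy inequality for the positive part $(u_t-u_t')^+$ on each subinterval $[t,\tau]$ with $\tau<T$ via the It\^o formula for the square $L^2$-norm of the positive part of a BSPDE solution, combine it with Gronwall's lemma, and finally let $\tau\uparrow T$. First I would subtract the two equations: writing $\delta u:=u-u'$ and $\delta\psi:=\psi-\psi'$, the pair $(\delta u,\delta\psi)$ solves, on $[0,\tau]$ for every $\tau\in(0,T)$, a BSPDE of the form \eqref{BSPDE-dqiutang} with linear part $\mathrm{tr}\big(\tfrac12\sigma_t\sigma_t^*D^2\delta u_t+D\delta\psi_t\,\sigma_t^*\big)+b_t^*D\delta u_t+\varrho_t^*\big(\delta\psi_t+\sigma_t^*D\delta u_t\big)$, inhomogeneity $f(t,\cdot,u_t)-f'(t,\cdot,u_t')$, and non-singular terminal value $\delta u_\tau=u_\tau-u_\tau'\in L^2(\Omega;H^{1,2})$ for a.e.\ $\tau$. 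The crucial structural point is that $\delta\psi+\sigma^*D\delta u=(\psi+\sigma^*Du)-(\psi'+\sigma^*Du')$ lies in $\cL^2_{\sF}(0,\tau;H^{1,2})$ even though $\delta\psi$ and $D\delta u$ need not individually be this regular in the degenerate setting; this is precisely the quantity the It\^o formula in \cite[Lemma A.3]{GraeweHorstQui13} (see also \cite[Theorem 3.2]{QiuTangBDSDES2010}, \cite[Corollary 3.11]{QiuWei-RBSPDE-2013}) is built around.

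Next I would apply that It\^o formula to $\|(\delta u_t)^+\|^2$ on $[t,\tau]$. Integrating by parts the terms $2\langle(\delta u_t)^+,\mathrm{tr}(\tfrac12\sigma\sigma^*D^2\delta u)\rangle$ and $2\langle(\delta u_t)^+,\mathrm{tr}(D\delta\psi\,\sigma^*)\rangle$, and using $\partial_{y^i}\big(((\delta u)^+)^2\big)=2(\delta u)^+\partial_{y^i}\delta u\,\mathbf{1}_{\{\delta u>0\}}$, these combine with the quadratic-variation contribution $-\|\delta\psi\,\mathbf{1}_{\{\delta u>0\}}\|^2$ into the non-positive perfect square $-\|(\delta\psi+\sigma^*D\delta u)\mathbf{1}_{\{\delta u>0\}}\|^2$ plus lower-order terms; the latter, after one further integration by parts, are bounded by $C\|(\delta u_t)^+\|^2$ thanks to the boundedness of $b,\,Db$ and of the up-to-second-order derivatives of $\sigma$ from (H.1)--(H.2) (any residual cross term in $\delta\psi$ is re-expressed through $\delta\psi+\sigma^*D\delta u$ and absorbed into the perfect square). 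The term $2\langle(\delta u_t)^+\varrho,(\delta\psi+\sigma^*D\delta u)\mathbf{1}_{\{\delta u>0\}}\rangle$ is absorbed by Young's inequality into half of the perfect square plus $C\|(\delta u_t)^+\|^2$, while the inhomogeneity contributes $2\langle f(t,u_t)-f'(t,u_t'),(\delta u_t)^+\rangle\le2\widetilde L\|(\delta u_t)^+\|^2$ by hypothesis. Since $(\delta u)^+\le(u-g)^+\le|u|+|g|\in\cL^2_{\sF}(0,T;L^2)$ --- and here the lower bound $u'\ge g$ is first used --- taking expectations (localizing the stochastic integral if necessary) gives, for all $0\le t\le\tau<T$,
\[
 E\|(\delta u_t)^+\|^2\ \le\ E\|(\delta u_\tau)^+\|^2+C\!\int_t^\tau E\|(\delta u_s)^+\|^2\,ds ,
\]
whence Gronwall's lemma yields $E\|(\delta u_t)^+\|^2\le e^{C(\tau-t)}E\|(\delta u_\tau)^+\|^2$ for every $\tau\in(t,T)$.

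It remains to let $\tau\uparrow T$. Since $u_\tau\to G$ and $u_\tau'\to G'$ a.e.\ in $\bR^d$ a.s.\ (Definition \ref{defn-solution}) and $G\le G'$, we have $(\delta u_\tau)^+\to0$ a.e.; the domination $((\delta u_\tau)^+)^2\le2u_\tau^2+2g_\tau^2$ provided by $u'\ge g$, together with $\int_{T-\varepsilon}^{T}E\big(\|u_s\|^2+\|g_s\|^2\big)\,ds\to0$, forces $E\|(\delta u_{\tau_n})^+\|^2\to0$ along a suitable sequence $\tau_n\uparrow T$, so the Gronwall bound gives $E\|(\delta u_t)^+\|^2=0$ for every $t\in[0,T)$; the case $t=T$ is immediate from $G\le G'$. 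This is \eqref{eq-prop-comprn}.

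The step I expect to be the main obstacle is the terminal passage $\tau\uparrow T$: because $u'$ is allowed to have a singular terminal value, the energy inequality cannot be run directly up to $T$, and one must check with some care that the positive part of the difference vanishes in the limit. This is exactly where the auxiliary hypothesis $u'\ge g$ with $g\in\cL^2(0,T;L^2)$ is indispensable --- both to make $(\delta u)^+$ globally square-integrable, so that Gronwall is applied on a fixed interval, and to supply the domination that upgrades the a.e.\ convergence $(\delta u_\tau)^+\to0$ to $L^2$-convergence. The remaining ingredients are the standard energy/perfect-square computation for degenerate BSPDEs.
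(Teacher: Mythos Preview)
Your proposal is correct and follows essentially the same route as the paper: take the difference $(\bar u,\bar\psi)=(u-u',\psi-\psi')$, apply the It\^o formula for $\|\bar u_t^+\|^2$ on $[t,\tau]$, use integration by parts to combine the second-order term, the $D\bar\psi\,\sigma^*$ term and the quadratic variation into the non-positive perfect square $-\|\bar\psi\mathbf{1}_{\{\bar u>0\}}+\sigma^*D\bar u^+\|^2$, bound the lower-order pieces via (H.1)--(H.2) and the hypothesis on $f-f'$, apply Gronwall, and then let $\tau\uparrow T$ using the domination $\bar u^+\le|u|+|g|$ together with the a.e.\ convergence $\bar u_\tau^+\to0$. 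The only cosmetic difference is in the terminal step: the paper phrases it as a direct application of Fatou's lemma to $\limsup_{\tau\uparrow T}E\|\bar u_\tau^+\|^2$, while you extract a subsequence $\tau_n\uparrow T$; both versions rest on exactly the same two ingredients you identify.
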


\begin{proof}
We put $(\bar{u},\bar{\psi})=(u-u',\psi-\psi') $. 
Since $b$, $\sigma$ and $\varrho$ as well as their first-order derivatives are bounded and $\bar u \in H^{1,2}$, the integration by parts yields a constant $L_2 < \infty$ s.t.
\[
	| \langle \bar u^+_t, (b^*_t+\varrho^*\sigma^*) D\bar u_t \rangle | \leq L_2 \langle \bar u^+_t,\bar u^+_t \rangle, \quad \forall t \in [0,T].
\]
Let us denote the entries of the diffusion matrix $\sigma_s$ by $\sigma^{jr}_s$ and the entries of the vector $\bar \psi_s$ by $\bar \psi_s^r$. Then integration by parts gives:
\begin{align*}
          \sigma^{jr}_s\partial_{y_j}\bar{\psi}^r_s=
     \partial_{y_j}\left(   \sigma^{jr}_s\bar{\psi}^r_s \right)-\partial_{y_j}\sigma^{jr}_s\bar{\psi}^r_s \quad \text{and} \quad
\sigma_s^{ir}\sigma^{jr}_s\partial_{y_iy_j}\bar{u}_s
=\partial_{y_i}\left(\sigma_s^{ir}\sigma^{jr}_s\partial_{y_j}\bar{u}_s\right)
 -\partial_{y_i}\left(\sigma_s^{ir}\sigma^{jr}_s\right)\partial_{y_j}\bar{u}_s.
\end{align*}
Hence, It\^o's formula yields a constant $L_3 < \infty$ s.t. (applying the summation convention):
  \begin{align}
    & \, E \|\bar{u}^+_t\|^2- E \|\bar{u}^+_{\tau}\|^2\nonumber  \\
    \leq&\,
     E\bigg[
            \int_t^{\tau}\!\!\!2\left\langle
            \bar{u}^+_s,\,
            -\frac{1}{2}\partial_{y_i}\left(\sigma^{ir}_s\sigma^{jr}_s\right)\partial_{y_j}\bar{u}^+_s
            -\partial_{y_j}\sigma^{jr}_s\bar{\psi}^r_s
            +\varrho ^r_s\bar{\psi}^r_s+L_3 \bar{u}^+_s
              \right\rangle\,ds
        \nonumber\\
        &\quad
            +\!\int_t^{\tau}\!\!\left\langle
            \partial_{y_j}\bar{u}^+_s,
            \,\sigma^{ir}_s\sigma^{jr}_s\partial_{y_i}\bar{u}^+_s
            -2\sigma^{jr}_s
            \left(\bar{\psi}^r_s+\sigma^{ir}_s\partial_{y_i}\bar{u}^+_s \right)
                \right\rangle\,ds
                -\!\int_t^{\tau}\!\!\!\|\bar{\psi}_s1_{\{u> u'\}}\|^2\,ds
            \bigg]
     \nonumber\\
     =&\,
     E\bigg[
            \int_t^{\tau}\!\!2\Big\langle
            \bar{u}^+_s,\,
            \frac{1}{4}\partial_{y_iy_j}\left(\sigma^{ir}_s\sigma^{jr}_s\right)
            \bar{u}^+_s
            +
            \left(\varrho ^r_s-\partial_{y_j}\sigma^{jr}_s\right)
            \left(\bar{\psi}^r_s+\sigma^{ir}_s\partial_{y_i}\bar{u}^+_s\right)
            +\frac{1}{2}\partial_{y_i}\left(\varrho ^r_s\sigma^{ir}_s-\sigma^{ir}_s\partial_{y_j}\sigma^{jr}_s\right)
            \bar{u}^+_s
        \nonumber\\
        &\quad
        +L_3 \bar{u}^+_s
              \Big\rangle\,ds
                -\!\int_t^{\tau}\!\!\!\left\|\bar{\psi}_s1_{\{u> u'\}}+\sigma^*_s D\bar{u}^+_s\right\|^2\,ds
            \bigg]
     \nonumber\\
     \leq&\,
     E\bigg[
            C\int_t^{\tau}\!\!\left\langle
            \bar{u}^+_s,\,
            \bar{u}^+_s
            +
            \big|\bar{\psi}_s1_{\{u> u'\}}+\sigma^*_s D\bar{u}^+_s\big|
              \right\rangle\,ds
                -\!\int_t^{\tau}\!\!\left\|\bar{\psi}_s1_{\{u> u'\}}+\sigma^*_s D\bar{u}^+_s\right\|^2\,ds
            \bigg]\ \ \ \ (\textrm{by (H.1),\ (H.2))}
     \nonumber\\
     \leq&\,
      E\bigg[
            C\int_t^{\tau}\!\left\|
            \bar{u}^+_s \right\|^2\,ds
                -\frac{1}{2}\!\int_t^{\tau}\!\!\!\left\|\bar{\psi}_s1_{\{u> u'\}}+\sigma^*_s D\bar{u}^+_s\right\|^2\,ds
            \bigg],\quad 0\leq t<\tau<T.
            \nonumber
  \end{align}
Thus, an application of {Gronwall's} inequality leads to:
  \begin{align*}
    E \|\bar{u}^+_t\|^2
    \leq C  E \|\bar{u}^+_{\tau}\|^2,
  \end{align*}
  with the constant $C$ independent of $t$ and $\tau$. Since $\bar u\leq |u|+|g|$, an application of Fatou's lemma yields \eqref{eq-prop-comprn} because
  \begin{align*}
  E\int_{0}^T \left\|\bar u^+_t\right\|^2 \, dt
  \leq CT\limsup_{\tau\uparrow T}E\left\| \bar u_{\tau}^+ \right\|^2
  \leq CTE\int_{\bR^d}\left[ \limsup_{\tau\uparrow T} \left|\bar u^+_{\tau}(y)\right|^2  \right]dy\,=0.
  \end{align*}
\end{proof}
}
{
\begin{rmk}
In Proposition \ref{prop-comprn}, we see that the random field $u'_t$ is allowed to take an infinite and thus singular terminal value. This property is essentially used in the proof of Theorem \ref{thm-minimal} for the uniqueness of solution to BSPDE \eqref{BSPDE-singlr}. 
\end{rmk}
}

\subsection{Truncated BSPDEs} 

A direct computation shows that $(u,\psi)$ solves \eqref{BSPDE-singlr} if and only if
$(v,\zeta){\triangleq}(\theta u,\theta\psi)$ solves the BSPDE
  \begin{equation}\label{BSPDE-singlr-1}
  \left\{\begin{array}{l}
  \begin{aligned}
  -dv_t(y)=\, &\bigg[\textrm{tr}\left(\frac{1}{2}\sigma_t\sigma_t^{*}
  D^2 v_t(y)
        + D \zeta_t\sigma^{*}_t(y)\right)+\tilde{b}^{*}_tD v_t(y)
+\beta_t^{*}\zeta_t(y)+c_tv_t(y)\\
        &\,+\theta(y)F(t,y,\theta^{-1}(y)v_t(y))   \bigg]\, dt
        -\zeta_t(y)\, dW_{t}, \quad (t,y)\in [0,T)\times \bR^d;\\
    v_T(y)=\,& +\infty,  \quad y\in\bR^d
    \end{aligned}
  \end{array}\right.
\end{equation}
with
  \begin{align*}
    \tilde{b}_t^i(y){\triangleq}\,&b^i_t(y) +2q(1+|y|^2)^{-1}\sum_{j=1}^d {(\sigma_t\sigma^{*}_t)}^{ij}(y)y^j, \ \ i=1,\ldots,d,\\
    \beta^r_t(y){\triangleq} \,&2q(1+|y|^2)^{-1}\sum_{j=1}^d\sigma_t^{jr}(y)y^j,\ \ r=1,\ldots,m,\\
    c_t(y) {\triangleq}\,&q(1+|y|^2)^{-1}\bigg(\textrm{tr}(\sigma_t\sigma_t^{*}(y))
    +\sum_{i=1}^d2y^ib^i_t(y)+2(q-1)(1+|y|^2)^{-1}\sum_{i,j=1}^d{(\sigma_t\sigma^{*}_t)}^{ij}(y)y^iy^j\bigg).
  \end{align*}

This BSPDE (\ref{BSPDE-singlr-1}) has a unique solution if the terminal value is finite. More precisely, for $N\in\bN^+$ we put $$\hat{F}(t,y,\phi(y)) {\triangleq} F(t,y,|\phi(y)|),\quad (t,y,\phi)\in \bR_+\times \bR^d\times L^{0}(\bR^d)$$ and consider then the family of BSPDEs: 
\begin{equation}\label{BSPDE-singlr-N}
  \left\{\begin{array}{l}
  \begin{aligned}
  -dv^N_t(y)=\, &\bigg[\textrm{tr}\left(\frac{1}{2}\sigma_t\sigma^{*}_t
  D^2 v^N_t(y)
        + D \zeta^N_t\sigma^{*}_t(y)\right)+\tilde{b}^{*}_tD v^N_t(y)
+\beta_t^{*}\zeta^N_t(y)+c_tv^N_t(y)\\
        &\,+\theta(y)\hat{F}(t,y,\theta^{-1}(y)v^N_t(y))\bigg]\, dt
        -\zeta^N_t(y)\, dW_{t}, \quad (t,y)\in [0,T]\times \bR^d;\\
    v^N_T(y)=\,& N\theta(y),  \quad y\in\bR^d.
    \end{aligned}
  \end{array}\right.
\end{equation}


\begin{prop}\label{prop-N}
Assume (H.1) and (H.2). For each $N\in\bN^+$ and $p\in[2,\infty)$, BSPDE (\ref{BSPDE-singlr-N}) has a unique solution $(v^N,\zeta^N)$ with
  $$(v^N,\zeta^N+\sigma^*Dv^N)\in \left(S^w_{\sF}([0,T];H^{1,2})\cap C^w_{\sF}([0,T]; H^{1,p})\right)\times L^{2}_{\sF}(0,T;H^{1,2}),$$
   s.t. $\theta^{-1} v^N\in \cL^{\infty}_{\sF}(0,T;L^{\infty}(\bR^d))$ and for arbitrary $\varphi\in C_c^{\infty}(\bR^d)$:
  \begin{align}
    \langle \varphi,\, v^N_t\rangle=\,  \langle\varphi,\,N\theta\rangle+
    \int_t^{T}\!&\left\langle\varphi,\,\textup{tr}\left( \frac{1}{2}\sigma_s\sigma^*_s
  D^2 v^N_s
        + D \zeta^N_s\sigma^{*}_s\right)
        +\tilde{b}^{*}_sD v^N_s+c_sv^N_s+\beta^{*}_s\zeta^N_s
        +\theta\hat{F}(s,\theta^{-1}v^N_s)   \right\rangle\, ds
        \nonumber
\\
        &\,
        -\int_t^{T}\!\!\left\langle\varphi,\,\zeta^N_s\right\rangle\, dW_{s}\quad{\text{a.s.},\, \forall \, 0\leq t\leq T.}\nonumber
  \end{align}
\end{prop}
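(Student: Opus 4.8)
The plan is to recast BSPDE \eqref{BSPDE-singlr-N} into the framework of Theorem \ref{prop-bspde-DTZ} after a suitable truncation of the nonlinearity, obtain a solution by a fixed-point argument, and then remove the truncation using the comparison principle of Proposition \ref{prop-comprn} together with explicit a priori bounds. First I would observe that the driver of \eqref{BSPDE-singlr-N}, namely $y\mapsto\theta(y)\hat F(t,y,\theta^{-1}(y)v(y))$, fails to be Lipschitz in $v$ globally because of the quadratic term $\theta^{-1}|v|^{2}/\eta$; so I would truncate it: for $M>0$ replace $r$ by $(r\wedge M)\vee(-M)$ inside $\hat F$, obtaining a driver $f^{M}$ that, thanks to (H.1) (in particular $\eta\ge\kappa$, $\gamma\ge0$, the boundedness of $\eta,\lambda$) and the presence of the weight $\theta$, satisfies the hypotheses of Theorem \ref{prop-bspde-DTZ}: $f^{M}(\cdot,\cdot,0)=\theta F(\cdot,\cdot,0)\in\cL^{p}(0,T;H^{1,p})\cap\cL^{2}(0,T;H^{1,2})$ (as noted after (H.3), using $q>d$), it is Lipschitz in the unknown with a constant $L_0=L_0(M,\kappa,\Lambda,L,q,d)$, and its $y$-derivative is Lipschitz in the unknown as well because $\theta,\theta^{-1},\eta,\lambda,\gamma$ and their first derivatives are controlled on the support considerations afforded by $\theta$. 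The linear lower-order terms $\tilde b,\beta,c$ coming from the change of variable $v=\theta u$ satisfy the regularity required of $b$ and $\varrho$ in Theorem \ref{prop-bspde-DTZ}, since $q(1+|y|^2)^{-1}y^{j}$ and its derivatives are bounded and $\sigma,b$ obey (H.1)--(H.2). Hence Theorem \ref{prop-bspde-DTZ}(i)--(iii) gives a unique solution $(v^{N,M},\zeta^{N,M})$ in $\left(S^w_{\sF}([0,T];H^{1,2})\cap C^w_{\sF}([0,T];H^{1,p})\right)\times L^{2}_{\sF}(0,T;H^{1,2})$ to the truncated equation, with terminal value $N\theta$, and the FBSDE representation of part (iii) when $p>2d+2$.

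Next I would derive $M$-independent $L^\infty$ bounds that let me drop the truncation. Comparing $(v^{N,M},\zeta^{N,M})$ with the two deterministic (spatially $\propto\theta$) super- and sub-solutions obtained by replacing $(\lambda,\gamma,\eta)$ with $(0,0,\kappa)$ and $(\Lambda,+\infty,\Lambda)$ — exactly the functions $\hat u^N_t\theta^{-1}$... rather $\hat u^N$ and $\tilde u^N$ displayed in Section \ref{sec-existence} — via Proposition \ref{prop-comprn} yields
\[
0\le \hat u^N_t(y)\le v^{N,M}_t(y)\le \tilde u^N_t(y)\le \frac{\theta(y)e^{2T}}{\tfrac1{N+\Lambda}+\tfrac{T-t}{\Lambda}},
\]
so in particular $\theta^{-1}v^{N,M}\in\cL^\infty_{\sF}(0,T;L^\infty(\bR^d))$ with a bound depending only on $N,\Lambda,\kappa,T$ and not on $M$. (To apply Proposition \ref{prop-comprn} one checks its hypotheses: the quadratic driver is locally Lipschitz, hence the inner-product inequality $\langle f(u)-f'(u'),(u-u')^+\rangle\le \widetilde L\|(u-u')^+\|^2$ holds on the relevant range once one also uses $u'\ge g$ with $g$ the lower bound; here $u'$ is the super-solution, which is bounded below, and $u$ is $v^{N,M}$.) Choosing $M$ larger than this bound, the truncation is inactive and $(v^{N,M},\zeta^{N,M})$ solves the original \eqref{BSPDE-singlr-N}; uniqueness in the stated class follows by one more application of Proposition \ref{prop-comprn} (applied in both directions between any two solutions, using that both are bounded after dividing by $\theta$, so the local Lipschitz estimate is available). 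The integral (weak) formulation with test functions $\varphi\in C^\infty_c(\bR^d)$ is exactly the statement that $(v^N,\zeta^N)$ is a solution in the sense of Definition \ref{defn-solution} on $[0,T]$, which is built into Theorem \ref{prop-bspde-DTZ}; one records it with $\tau=T$ since the terminal value is finite.

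I expect the main obstacle to be verifying that $f^{M}$ and especially $\partial_y f^{M}$ satisfy the Lipschitz-in-the-unknown requirement of Theorem \ref{prop-bspde-DTZ} with a constant that is uniform enough to run the argument, and — more delicately — checking the inner-product monotonicity hypothesis of Proposition \ref{prop-comprn} for the genuinely quadratic (untruncated) driver so that uniqueness goes through directly for \eqref{BSPDE-singlr-N} rather than only for its truncation. The point is that $F$ is decreasing in $r$ on $[0,\infty)$ (each term $-r^2/(\gamma+r)$, $-r^2/\eta$ is decreasing, while $\lambda$ is constant in $r$), so $\langle F(u)-F(u'),(u-u')^+\rangle\le0$ whenever $u,u'\ge0$; combined with the bounded linear terms this gives the required inequality with $\widetilde L$ depending only on the bounds of $\tilde b,\beta,c$, and that is what makes both the comparison step and the uniqueness step clean. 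Everything else is a routine transcription of the change of variables $v=\theta u$ and an invocation of the cited theorems.
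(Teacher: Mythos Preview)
Your proposal is correct and follows essentially the same approach as the paper: truncate the quadratic nonlinearity, apply Theorem \ref{prop-bspde-DTZ} to the truncated problem, obtain $M$-independent $L^\infty$ bounds via Proposition \ref{prop-comprn}, and then remove the truncation by choosing $M$ large. The only minor difference is that the paper uses the simpler supersolution $\hat v_t(y)=\theta(y)\bigl(N+\Lambda(T-t)\bigr)$ (obtained by replacing $(\lambda,\gamma,M)$ with $(\Lambda,+\infty,0)$, i.e.\ dropping all negative terms in the driver) to get $0\le v^{N,M}\le \hat v$, whereas you invoke the sharper but more elaborate barriers $\hat u^N,\tilde u^N$ from Section \ref{sec-existence}; the paper's choice makes the comparison hypothesis immediate for every $M$, while yours requires observing first that $\theta^{-1}\tilde u^N$ is bounded so the comparison goes through once $M$ exceeds that bound.
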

\begin{proof}
To prove existence of a solution, one truncates the quadratic term in $\hat F$ at some level $M$ using a smooth truncation function as in the proof of Proposition 4.1 in \cite{GraeweHorstQui13}. For each $M\in\bN_0$, we know from Theorem \ref{prop-bspde-DTZ} that the resulting BSPDE 
has a unique solution $(v^{N,M},\zeta^{N,M})$ with $$(v^{N,M},\zeta^{N,M}+\sigma^*Dv^{N,M})\in \left(S^w_{\sF}([0,T];H^{1,2})\cap C^w_{\sF}([0,T]; H^{1,p})\right)\times L^{2}_{\sF}(0,T;H^{1,2}).$$
Changing the coefficients $(\lambda,\gamma,M)$ in the above BSPDE to $(\Lambda,+\infty,0)$ we get a new equation. For this equation, one readily checks that
\[
	(\hat{v}_t(y),0)\triangleq(\theta(y)\left(N+\Lambda(T-t)\right),0)
\]
is a solution, and the comparison principle stated in Proposition \ref{prop-comprn} yields:
$$
0\leq v^{N,M}_t\leq \hat{v}_t\quad \ \text{a.e. in }\bR^d, \text{ $\forall\,t\in[0,T]$, {a.s.}}
$$
Choosing $M\in\bN^+$ large enough, we see that 
%
$(v^{N,M},\zeta^{N,M})$  is a solution to \eqref{BSPDE-singlr-N}. Uniqueness of solutions follows from a similar arguments.
\end{proof}

\begin{cor}\label{cor-prop-N}
Assume that the coefficients of the BSPDE \eqref{BSPDE-singlr-N} satisfy Conditions (H.1)-(H.2) and denote the solution by $(v^N,\zeta^N)$. Let $(\tilde{\lambda},\tilde{\gamma},\tilde{\eta})$ be another set of coefficients which satisfies the same conditions as $(\lambda,\gamma,\eta)$. Let $G\in L^2(\Omega,\sF_T;H^{1,2})$ and
  $$(\tilde{v},\tilde{\zeta})\in S^w_{\sF}([0,T];H^{1,2})\times L^{2}_{\sF}(0,T;L^2)$$
  with $\theta^{-1} \tilde{v}\in \cL^{\infty}_{\sF}(0,T;L^{\infty}(\bR^d))$
  being a solution to the BSPDE:
    \begin{equation}\label{BSPDE-singlr-bar}
  \left\{\begin{array}{l}
  \begin{aligned}
  -d\tilde{v}_t(y)=\, &\bigg[\textup{tr}\,\Big(\frac{1}{2}\sigma_t\sigma_t^*
  D^2 \tilde{v}_t(y)
        + D \tilde{\zeta}_t\sigma^{*}_t(y)\Big)+\tilde{b}^{*}_tD \tilde{v}_t(y)
+\beta_t^{*}\tilde{\zeta}_t(y)+c_t\tilde{v}_t(y)+ \theta\tilde{\lambda}_t(y)\\
        &\,
        -\int_{\mathcal{Z}}\frac{\theta^{-1}(y) |\tilde{v}_t(y)|^2}{\tilde{\gamma}_t(y,z)+\theta^{-1} |\tilde{v}_t(y)|}\mu(dz)
        - \frac{\theta^{-1}(y)\left|\tilde{v}_t(y)\right|^2}{\tilde{\eta}_t(y)} \bigg]\, dt
        -\tilde{\zeta}_t(y)\, dW_{t};\\
    \tilde{v}_T(y)=\,& G(y),  \quad y\in\bR^d.
    \end{aligned}
  \end{array}\right.
\end{equation}
If $G\geq  N\theta$, $\tilde{\lambda}\geq \lambda$, $\tilde{\gamma}\geq \gamma$ and $\tilde{\eta}\geq \eta$, then a.s.
$$\tilde{v}_t(y)\geq v^N_t(y)\ \ \ \text{a.e. in }\bR^d,\ \ \forall t\in[0,T].$$
{Moreover, the inequality also holds with all ``$\geq$" replaced by ``$\leq$" in above statement}.
\end{cor}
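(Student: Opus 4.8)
The plan is to prove Corollary~\ref{cor-prop-N} by a direct application of the comparison principle Proposition~\ref{prop-comprn} to the two BSPDEs \eqref{BSPDE-singlr-N} and \eqref{BSPDE-singlr-bar}, after rewriting both so that they fit into the common template \eqref{BSPDE-dqiutang}. Both equations already share the second-order part $\mathrm{tr}(\tfrac12\sigma\sigma^*D^2\cdot + D\cdot\,\sigma^*)$, the drift $\tilde b^*D\cdot$, the zero-order term $c_t\cdot$, and the linear term $\beta_t^*\zeta$ in the unknown martingale integrand; so in the notation of Theorem~\ref{prop-bspde-DTZ} we take $\varrho=\beta$, and the genuinely nonlinear driver is $f(t,y,r)=c_t(y)\theta^{-1}(y)r \cdot 0 + \theta(y)\hat F(t,y,\theta^{-1}(y)r)$ for the $v^N$-equation (with $\lambda,\gamma,\eta$) and the analogous expression with $\tilde\lambda,\tilde\gamma,\tilde\eta$ for $\tilde v$. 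Strictly speaking the $c_t v_t$ term is linear and may be absorbed either into $f$ or kept separate; either way it is common to both equations and does not affect the comparison.

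\medskip

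First I would verify the integrability hypotheses of Proposition~\ref{prop-comprn}: for $(v^N,\zeta^N)$ these are exactly the regularity asserted in Proposition~\ref{prop-N} (in particular $v^N\in S^w_{\sF}([0,T];H^{1,2})$, $\zeta^N+\sigma^*Dv^N\in L^2_{\sF}(0,T;H^{1,2})$, and $\theta^{-1}v^N$ bounded forces $\theta\hat F(\cdot,\theta^{-1}v^N)\in\cL^1_{\sF}(0,T;L^1)$); for $\tilde v$ they are in the hypotheses of the corollary (again $\theta^{-1}\tilde v$ bounded gives the required $L^1$ integrability of the driver). The lower-bound requirement ``$u'\ge g$ for some $g\in\cL^2(0,T;L^2)$'' is met by taking $g=0$ for the direction $v^N\le\tilde v$, since $v^N$ is known to be nonnegative (indeed $0\le v^{N,M}\le\hat v$ from the proof of Proposition~\ref{prop-N}), and for the reverse direction we may use the explicit upper bound $v^N\le\hat v=\theta(N+\Lambda(T-t))$ which is deterministic and in $\cL^2(0,T;L^2)$ because $\theta$ is integrable.

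\medskip

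The crux is the monotonicity inequality on the drivers, namely
\[
  \big\langle\, f(t,\tilde v_t)-\tilde f(t,v^N_t),\,(\tilde v_t-v^N_t)^-\big\rangle \le \widetilde L\,\|(\tilde v_t-v^N_t)^-\|^2
\]
(with the roles swapped appropriately so that Proposition~\ref{prop-comprn} delivers $v^N\le\tilde v$). I would split this into two contributions: the difference at fixed coefficients, $f(t,\tilde v_t)-f(t,v^N_t)$, and the difference of coefficients at fixed argument, $f(t,v^N_t)-\tilde f(t,v^N_t)$. For the first, each of the three summands $r\mapsto -\theta\frac{\theta^{-2}r^2}{\gamma+\theta^{-1}|r|}$, $r\mapsto-\theta^{-1}r^2/\eta$, $r\mapsto\theta\lambda$ has a derivative in $r$ that is bounded on the relevant range once one uses that $\theta^{-1}v^N$ and $\theta^{-1}\tilde v$ are essentially bounded; hence $f$ is Lipschitz in $r$ uniformly in $(\omega,t,y)$ with a constant depending on those bounds, which handles this piece with the $\widetilde L\|\cdot\|^2$ term. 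For the second piece, the sign is the key point: on the set $\{\tilde v<v^N\}$ one has $f(t,v^N_t)-\tilde f(t,v^N_t)\ge 0$ precisely because $\tilde\lambda\ge\lambda$ makes the $\lambda$-term increase, while $\tilde\gamma\ge\gamma$ and $\tilde\eta\ge\eta$ make the two \emph{subtracted} positive terms decrease, so the net change in $f$ is nonnegative; pairing a nonnegative function with $(\tilde v_t-v^N_t)^-\ge0$ gives a nonnegative contribution, which is absorbed harmlessly on the correct side of the inequality. Together with $G\ge N\theta=v^N_T$ this is exactly the hypothesis of Proposition~\ref{prop-comprn} and yields $v^N\le\tilde v$ a.e., a.s.\ for all $t$. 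The ``$\le$'' version is obtained verbatim by reversing all four inequalities on the coefficients and on $G$, which flips the sign of the coefficient-difference term while the Lipschitz estimate is unchanged. I expect the only delicate bookkeeping — hence the main obstacle — to be checking the uniform Lipschitz/monotonicity bounds on the three nonlinear summands using the $\cL^\infty$ bounds on $\theta^{-1}v^N$ and $\theta^{-1}\tilde v$, and making sure the weight $\theta$ is carried correctly through $\hat F(t,y,\theta^{-1}v)$ so that the composed driver is genuinely Lipschitz in the $L^2$ sense required by the proposition.
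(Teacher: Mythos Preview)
Your approach is exactly the one the paper intends: the corollary is stated without proof precisely because it is meant to be a direct application of Proposition~\ref{prop-comprn} once the boundedness of $\theta^{-1}v^N$ and $\theta^{-1}\tilde v$ makes the nonlinear drivers Lipschitz on the relevant range. So strategically you are on target.

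There are, however, several bookkeeping slips that would derail the argument as written. First, in Proposition~\ref{prop-comprn} the driver $f$ goes with $u$ and $f'$ with $u'$, and the condition reads $\langle f(u_t)-f'(u'_t),(u_t-u'_t)^+\rangle\le\widetilde L\|(u_t-u'_t)^+\|^2$; to conclude $v^N\le\tilde v$ you must take $u=v^N$, $u'=\tilde v$, so the quantity to estimate is $\langle f(v^N_t)-\tilde f(\tilde v_t),(v^N_t-\tilde v_t)^+\rangle$, not the version with swapped arguments you wrote. Second, the sign of the coefficient-difference term is the opposite of what you claim: since $\tilde\lambda\ge\lambda$, $\tilde\gamma\ge\gamma$, $\tilde\eta\ge\eta$ one gets $\tilde f(\cdot,r)\ge f(\cdot,r)$, i.e.\ $f-\tilde f\le 0$; paired with the nonnegative $(v^N-\tilde v)^+$ this yields a \emph{nonpositive} contribution, which is indeed harmless on the left of an upper bound --- but a nonnegative contribution, as you asserted, would not be. Third, the lower-bound hypothesis $u'\ge g$ in Proposition~\ref{prop-comprn} concerns $u'=\tilde v$, not $v^N$; it is satisfied because $\theta^{-1}\tilde v\in\cL^\infty$ forces $\tilde v\ge -C\theta\in\cL^2(0,T;L^2)$ (in fact both terminal values here are finite, so one may simply take $\tau=T$ in the Gronwall step and the Fatou argument is not needed). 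With these three corrections your proof goes through and coincides with the paper's intended argument.
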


\end{appendix}

%
%
\vspace{4mm}

\bibliographystyle{siam}

\end{document}